\documentclass[12pt]{article}

\usepackage[a4paper,top=2.5cm,bottom=2.5cm,left=2.5cm,right=2.5cm,marginparwidth=1.75cm]{geometry}
\usepackage{lineno}
\usepackage{natbib}
\usepackage{amssymb}
\usepackage{bm}
\usepackage{authblk}
\usepackage{amsmath}
\usepackage{amsthm}
\usepackage{epsfig}
\usepackage{graphicx}
\usepackage{graphics}
\usepackage{float}
\usepackage{subfigure}
\usepackage{multirow}
\usepackage{color}
\usepackage{fullpage}
\usepackage[normalem]{ulem} 
\usepackage{makeidx}
\usepackage{xspace}
\usepackage{wrapfig}
\usepackage{setspace}
\usepackage{kotex}
\usepackage{url}
\usepackage{makecell}  
\makeindex

\newtheorem{theorem}{Theorem}

\newtheorem{remark}{Remark}

\newcommand{\thetab}{{\bm{\theta}}}

\providecommand{\keywords}[1]
{
  \small	
  \textbf{\textit{Keywords: }} #1
}

\begin{document}

\title{Differential gene expression analysis via two-component mixture models with a semiparametric skew-normal scale mixture alternative}

\author[1]{Sangkon Oh}
\affil[1]{Department of Statistics and Data Science, Pukyong National University}
\affil[1]{E-mail: ohsangkon@pknu.ac.kr}


\author[2*]{Geoffrey J. McLachlan}
\affil[2*]{School of Mathematics and Physics, University of Queensland}
\affil[2*]{E-mail: g.mclachlan@uq.edu.au}

\date{}

\maketitle

\begin{abstract}
Two–component mixture models are particularly useful for identifying differentially expressed genes, but their performance can deteriorate markedly when the alternative distribution departs from parametric assumptions or symmetry. We propose a semiparametric mixture model in which the null component is standard normal and the alternative follows a skew–normal scale mixture with an unspecified scale mixing distribution. This formulation accommodates skewness and heavy tails, providing a flexible and computationally tractable tool for differential gene–expression analysis without restrictive distributional assumptions. We establish identifiability and consistency of the model and develop an efficient estimation algorithm that incorporates nonparametric maximum likelihood estimation of the scale distribution. Numerical studies show notable improvements over existing parametric and nonparametric approaches for modeling the alternative distribution, and applications to colon cancer and leukemia datasets demonstrate reduced false discovery and false negative rates.
\end{abstract}

\keywords{Differential gene expression, Microarray data, Nonparametric maximum likelihood estimation, Semiparametric mixture models}

\section{Introduction} \label{sec:intro}


Gene expression analysis plays a critical role in elucidating the biological mechanisms underlying complex diseases, identifying key biomarker genes, and facilitating the development of diagnostic, preventive, and therapeutic strategies. In this context, the identification of differentially expressed genes has become a cornerstone of modern genomics, particularly with the advent of high-throughput technologies such as microarrays and RNA sequencing. These platforms enable the simultaneous quantification of expression levels for tens of thousands of genes, but they also introduce substantial statistical challenges due to the scale and multiplicity of hypothesis testing involved.

To address this issue, a wide array of statistical methods has been developed, among which penalized variable-selection techniques have gained significant prominence. Since the introduction of the least absolute shrinkage and selection operator (LASSO) by \citet{tibshirani1996regression}, numerous extensions and domain-specific applications have been proposed for gene expression analysis, including those by \citet{ghosh2005classification}, \citet{zhang2006gene}, \citet{ma2007supervised}, and \citet{guo2017l1}. These approaches are effective in managing high-dimensional data by promoting sparsity and identifying informative gene subsets. However, they also exhibit several limitations, as the selected features may not generalize reliably across models, and the repeated model fitting required in ultra-high-dimensional settings can lead to substantial computational burdens.

Alternatively, motivated in part by the two-component Gaussian mixture model for observed log ratios proposed by \citet{lee2000importance}, \citet{efron2001empirical} and \citet{efron2002empirical} introduced the local false discovery rate (FDR), a refinement of the classical tail-area–based FDR originally developed by \citet{benjamini1995controlling}. Subsequently, \citet{efron2004large} and \citet{efron2005local} proposed transforming observed statistics into $z$-scores to facilitate the application of local FDR methods for identifying differentially expressed genes. Under the assumption that gene expression levels are normally distributed and identically distributed for infected and uninfected cells, the $z$-scores corresponding to non-differentially expressed genes follow a standard normal distribution.

Building upon these ideas, \citet{mclachlan2006simple} introduced a two-component Gaussian mixture model that incorporates both $z$-score transformation and local FDR estimation. In this framework, the $z$-score distribution is modeled as
\begin{align}
g(z) = \pi f_0(z) + (1 - \pi) f_1(z),
\label{twocomp}
\end{align}
where $f_0(z)$ denotes the null distribution corresponding to non-differentially expressed genes, $f_1(z)$ represents the alternative distribution for differentially expressed genes, and $\pi$ is the proportion of genes that are not differentially expressed. The null distribution $f_0(z)$ is typically known or assumed to be standard normal, while the alternative distribution $f_1(z)$ is also modeled parametrically as a normal distribution. This formulation enables probabilistic inference at the gene level and provides a coherent basis for ranking genes according to their posterior probabilities of being differentially expressed.

Despite their computational convenience, parametric assumptions for the alternative distribution $f_1(z)$ in \eqref{twocomp} have inherent limitations, particularly when the true distribution deviates from the assumed form. To address this issue, \citet{bordes2006semiparametric} proposed a semiparametric two-component mixture model in which the null component is specified, while the alternative component is left completely unspecified but constrained to be symmetric. Their work established key identifiability results under symmetry assumptions and developed consistent estimation procedures based on moment equations and symmetrization techniques. Subsequent studies, including \citet{ma2015flexible}, \citet{pommeret2019semiparametric}, and \citet{milhaud2022semiparametric}, further extended this semiparametric framework. These contributions enhanced modeling flexibility by relaxing rigid parametric assumptions on the alternative distribution while maintaining identifiability.

However, the assumption that the alternative density is symmetric can still be overly restrictive in practice. To overcome this limitation, we propose a new semiparametric mixture modeling framework in which the null distribution is parametrically specified, while the alternative distribution is modeled using a semiparametric skew–normal scale mixture (SNSM) distribution. This semiparametric SNSM formulation provides enhanced flexibility by accommodating a broad class of distributions—including both symmetric and asymmetric unimodal forms—without requiring a model selection procedure. Estimation proceeds via a semiparametric maximum likelihood method, where the parameters of the skew-normal component are treated parametrically, and the mixing distribution is estimated nonparametrically using the nonparametric maximum likelihood estimator (NPMLE) framework of \citet{Lindsay_1995}.
Although not all distributions strictly fall within the class of skew-normal scale mixtures, the flexibility of the proposed semiparametric SNSM model is expected to mitigate potential issues arising from model misspecification. Recent studies, including \citet{lee2024finite}, \citet{seo2024adaptive}, and \citet{Oh2026}, have successfully applied the semiparametric SNSM framework in various contexts—such as finite mixture models, robust regression, and accelerated failure time modeling—demonstrating its effectiveness in capturing diverse distributional characteristics without imposing rigid parametric assumptions.

The remainder of this paper is organized as follows. Section~2 reviews the preprocessing steps used to obtain $z$-scores from microarray data, summarizes existing two-component mixture approaches for differential expression analysis, and introduces the semiparametric SNSM distribution. Section~3 presents the proposed semiparametric two-component mixture model—featuring a semiparametric SNSM alternative that accommodates both symmetric and asymmetric distributions—establishes identifiability and consistency, and details the estimation procedure. Section~4 reports a comprehensive simulation study comparing the proposed method with existing competitors. Section~5 applies the methodology to real gene-expression datasets. Finally, Section~6 concludes with a discussion of future research directions.

\section{Literature Review} 
\subsection{Data Preprocessing} \label{sec:preprocessing}

Let $\mathbf{X} = (x_{ij})_{N \times M}$ denote the gene expression matrix, where $N$ is the number of genes and $M$ is the number of tissue samples. Each row $\mathbf{x}_{i \cdot} = (x_{i1}, \ldots, x_{iM})$ represents the expression profile of gene $i$, and each sample $j$ belongs to one of $G$ experimental groups, labeled by $g_j \in \{1, \ldots, G\}$. For each gene $i$, we test the null hypothesis that its mean expression level does not differ across groups.

When $G = 2$, the equality of means can be assessed using the pooled two-sample $t$-statistic:
\[
t_i = 
\frac{\bar{x}_{i1} - \bar{x}_{i2}}
{s_{ip} \sqrt{1/m_1 + 1/m_2}},
\qquad
s_{ip}^2 = 
\frac{(m_1 - 1)s_{i1}^2 + (m_2 - 1)s_{i2}^2}{m_1 + m_2 - 2},
\]
where $\bar{x}_{ig}$ and $s_{ig}^2$ denote the sample mean and variance of gene $i$ in group $g$ ($g = 1, 2$), and $m_g$ is the number of samples in group $g$ (so that $m_1 + m_2 = M$). Under the null hypothesis, $t_i$ follows a $t$-distribution with $\nu = m_1 + m_2 - 2$ degrees of freedom. For multi-group settings ($G > 2$), the one-way ANOVA $F$-statistic is used analogously to test for equality of group means.

As the number of hypothesis tests increases, the risk of type I errors rises sharply. Therefore, it is essential to perform simultaneous inference that accounts for multiple testing. \citet{efron2004large}, \citet{efron2005local}, and \citet{mclachlan2006simple} proposed transforming each observed test statistic $t_i$ into a standard normal variable $Z_i$ to enable unified modeling across all genes. Let $F_0(\cdot)$ denote the cumulative distribution function (CDF) of $t_i$ under the null hypothesis. Based on $t_i$, \citet{mclachlan2006simple} proposed computing the two-sided $p$-value as
\begin{align}
P_i = 1 - F_0(t_i) + F_0(-t_i),
\label{eq:pval2sided}
\end{align}
which is then transformed into a one-sided upper-tail $z$-score via the probit (Gaussian quantile) transformation:
\begin{equation}
z_i = \Phi^{-1}(1 - P_i),
\label{eq:ztransform}
\end{equation}
where $\Phi^{-1}(\cdot)$ denotes the quantile function of the standard normal distribution. Under the assumption that gene expression levels are normally and identically distributed across groups, the $z$-scores corresponding to non-differentially expressed genes follow a standard normal distribution. That is, under the true null hypothesis, $z_i \sim \mathcal{N}(0, 1)$, while larger positive values of $z_i$ indicate stronger evidence of differential expression for gene $i$.

\subsection{Two-Component Mixture Model}

The transformed $z$-scores provide a standardized scale for assessing statistical evidence of differential expression. To move beyond conventional thresholding of $p$-values or $z$-scores, \citet{efron2001empirical} and \citet{efron2002empirical} proposed modeling the empirical distribution of test statistics using a two-component mixture model as in \eqref{twocomp}. This framework enables the computation of posterior probabilities, such as the local FDR, defined as the conditional probability that a gene is null given its observed $z$-score:
\begin{align}
\label{localFDR}
\gamma(z) = \frac{\pi f_0(z)}{g(z)},    
\end{align}
where $g(z)$ is the marginal density of $z$, $f_0(z)$ is the null density (typically standard normal), and $\pi$ is the prior probability that a gene is not differentially expressed.

A widely used specification for $g(z)$ is the two-component Gaussian mixture model proposed by \citet{mclachlan2006simple}:
\begin{align}
\label{gmm}
g(z) = \pi \phi(z; 0,1) + (1 - \pi)\phi(z; \mu_1, \sigma_1^2),   
\end{align}
where $\phi(z; \mu, \sigma^2)$ denotes the normal density with mean $\mu$ and variance $\sigma^2$, and the alternative distribution is characterized by $\mu_1 > 0$ to reflect right-tailed enrichment. While this Gaussian specification is computationally convenient, it imposes strong parametric assumptions on the alternative distribution $f_1(z)$, which is typically unknown and may deviate from normality in real biological data. Such misspecification can bias local FDR estimation and compromise inference accuracy.

To address these limitations, \citet{bordes2006semiparametric} introduced a semiparametric two-component mixture model in which the null distribution $f_0(z)$ is specified, while the alternative distribution $f_1(z)$ is left unspecified but assumed to be symmetric about an unknown location. This symmetry assumption is mild yet sufficient to ensure identifiability of the mixture components, and allows for modeling a broader range of symmetric but non-Gaussian distributions. Estimation methods based on projection or moment equations can be applied to consistently recover the mixture structure without fully specifying $f_1(z)$.

Recent extensions have further broadened the applicability of this framework. For instance, \citet{pommeret2019semiparametric} and \citet{milhaud2022semiparametric} generalized the semiparametric mixture model to two-sample settings, allowing for direct comparison between two unknown alternative distributions. These developments reinforce the semiparametric mixture approach as a powerful and interpretable alternative to both fully parametric and nonparametric models, particularly in large-scale gene expression analysis where the null distribution is well understood but the alternative remains elusive.

\subsection{Semiparametric Skew-Normal Scale Mixture Distribution}

In the semiparametric two-component mixture model considered in this work, the null density $f_0(z)$ is well represented by a standard normal distribution, reflecting the theoretical behavior of $z$-scores under the null hypothesis. In contrast, the alternative density $f_1(z)$ typically departs from normality, exhibiting features such as skewness, heavy tails, or scale heterogeneity. To flexibly accommodate these characteristics, we model $f_1(z)$ using a semiparametric SNSM distribution, which provides a unified framework capable of representing both symmetric and asymmetric families as well as a wide range of tail behaviors.

The starting point is the skew-normal distribution introduced by \citet{azzalini1985class}. For a location parameter $\mu$, scale parameter $\sigma > 0$, and shape parameter $\lambda$, its density is given by
\begin{equation*}
f_{\mathrm{SN}}(z; \mu, \lambda, \sigma)
= \frac{2}{\sigma}\,
\phi\!\left( \frac{z - \mu}{\sigma} \right)
\Phi\!\left( \lambda \frac{z - \mu}{\sigma} \right),
\end{equation*}
where $\phi(\cdot)$ and $\Phi(\cdot)$ denote the standard normal probability density and cumulative distribution functions, respectively. The shape parameter $\lambda$ controls the magnitude and direction of skewness, and the model reduces to the Gaussian distribution when $\lambda = 0$.

To extend this distribution toward heavier tails, \citet{branco2001general} introduced the SNSM family. Let $G$ be a mixing distribution on $(0,\infty)$; then the SNSM density takes the form
\begin{equation}
f_{\mathrm{SNSM}}(z; \mu, \lambda, G)
= \int
\frac{2}{\sigma}\,
\phi\!\left( \frac{z - \mu}{\sigma} \right)
\Phi\!\left( \lambda \frac{z - \mu}{\sigma} \right)
\, dG(\sigma),
\label{eq:ssnsm_density}
\end{equation}
which reduces to the skew-normal when $G$ is degenerate. Various skewed and heavy-tailed distributions arise as special cases for specific choices of $G$, including the skew-$t$, skew-slash, and skew-contaminated normal families. Moreover, setting $\lambda = 0$ yields Gaussian scale mixtures, which encompass the $t$-distribution, Laplace, logistic, and other well-known heavy-tailed models \citep{andrews1974scale, efron1978broad, west1987scale}.

While many existing applications impose a parametric form on $G$ in \eqref{eq:ssnsm_density}, doing so risks model misspecification if the true distribution of scales deviates from the chosen parametric family. Motivated by this, we treat $G$ as an infinite-dimensional parameter and estimate it nonparametrically. Let $z_1,\ldots,z_N$ be independent observations from the SNSM distribution. For fixed $(\mu,\lambda)$, the log-likelihood function is
\[
\tilde{\ell}_N(G)
= \sum_{i=1}^N \log f_{\mathrm{SNSM}}(z_i; \mu, \lambda, G).
\]
Because $G$ ranges over all probability measures on $(0,\infty)$, maximizing $\tilde{\ell}_N(G)$ requires specialized tools from nonparametric mixture theory. Following the work of \citet{Lindsay_1995}, one characterizes the NPMLE via directional derivatives. The directional derivative of $\tilde{\ell}_N$ at a candidate mixing distribution $\hat{G}$ in the direction of a point mass $H_\sigma$ is
\[
D_{\hat{G}}(\sigma)
= \lim_{\alpha \to 0}
\frac{
\tilde{\ell}_N\!\left\{ (1-\alpha)\hat{G} + \alpha H_\sigma \right\}
- \tilde{\ell}_N(\hat{G})
}{\alpha}.
\]
The NPMLE $\hat{G}$ satisfies the necessary and sufficient optimality conditions
\begin{align}
\label{NPMLE conditions}
D_{\hat{G}}(\sigma) \le 0 \quad \text{for all } \sigma > 0,
\qquad
D_{\hat{G}}(\sigma^*) = 0 \quad \text{for all } \sigma^* \in S(\hat{G}),   
\end{align}
where $S(\hat{G})$ denotes the support of $\hat{G}$. 

Models based on the NPMLE framework have been successfully applied in a wide range of settings, including linear and robust regression, heteroscedastic modeling, survival analysis, and finite mixture modeling \citep{SL15, XYS16, Seo_2017, AFT_NGSM, seo2024adaptive, lee2024finite, oh2024semiparametric, park2025penalized, Oh2026}. By adapting automatically to unknown distributional features without imposing restrictive parametric assumptions, these models offer a flexible and robust approach to statistical modeling across diverse applications.

\section{Proposed Method} \label{sec:proposed}
\subsection{Model Setup} \label{subsec:proposed}

To flexibly accommodate a broad class of alternative distributions in~\eqref{twocomp}, including those exhibiting asymmetry or heavy-tailed behavior, we propose a semiparametric two-component mixture model. The null component is modeled by a standard normal density, whereas the alternative component is represented by a semiparametric SNSM, which provides a unified mechanism for capturing skewness and scale heterogeneity.

Let $Z$ denote the $z$-score obtained through the transformation in~\eqref{eq:ztransform}. We assume that the density of $Z$ admits the mixture representation as
\begin{equation}
\label{eq:proposed_model}
f(z)
= \pi \,\phi(z; 0,1)
  + (1 - \pi)\, f_{\mathrm{SNSM}}(z; \mu, \lambda, G),
\end{equation}
where
\begin{itemize}
    \item $\phi(z; 0, 1)$ is the standard normal density defining the null component;
    \item $\pi \in (0,1)$ is the prior proportion of null genes;
    \item $f_{\mathrm{SNSM}}(z; \mu, \lambda, G)$ denotes the SNSM alternative density;
    \item $\mu > 0$ is the location shift under the alternative;
    \item $\lambda \ge 0$ is the skewness parameter;
    \item $G$ is an unspecified mixing distribution over scale parameters $\sigma>0$.
\end{itemize}

Model~\eqref{eq:proposed_model} generalizes the classical Gaussian mixture model in~\eqref{gmm} by replacing its parametric normal alternative with a semiparametric SNSM distribution. This modification permits the alternative density to exhibit a wide range of shapes—including symmetric, skewed, light-tailed, and heavy-tailed behaviors—while retaining a parsimonious and interpretable framework through the finite-dimensional parameters $(\mu, \lambda)$ and the nonparametric scale mixing distribution $G$.
In this setting, because non-null genes tend to produce positive $z$-scores, 
imposing $\mu > 0$ and $\lambda \ge 0$ ensures that the alternative component 
appropriately reflects the expected right-tail weight and prevents sign-reversal ambiguity.

The standard normal distribution belongs to the SNSM family only in the degenerate case where $\mu = 0$, $\lambda = 0$, and $G = H_1$. 
Under the restriction $\mu > 0$, the SNSM family therefore excludes the standard normal density. 
Consequently, the null Gaussian component and the alternative SNSM component are structurally distinct, and the former cannot be represented within the latter. 
This structural distinction suggests that the null component should be uniquely determined in the mixture representation.

The identifiability of model~\eqref{eq:proposed_model} is not immediate due to the presence of the infinite-dimensional mixing distribution \(G\). To address this issue, we impose a set of regularity conditions that ensure a clear separation between the null and alternative components and prevent degeneracy in the scale mixture representation. In particular, Assumption~(A1) establishes linear independence between the null Gaussian density and the SNSM alternative class. Assumptions~(A2)--(A3) impose mild regularity conditions on the mixing distribution \(G\), ensuring that the scale parameter remains bounded away from zero and that the likelihood function is well-defined. These latter conditions are primarily required for establishing consistency of the maximum likelihood estimator and for guaranteeing the well-posedness of the estimation problem.

\medskip
\noindent\textbf{Assumptions.}
\begin{enumerate}
\item[(A1)] The null Gaussian density \(\phi(\cdot;0,1)\) is linearly independent of the non-null SNSM class
\[
\mathcal{F}_{\mathrm{SNSM}}^{+}
=
\left\{
f_{\mathrm{SNSM}}(\cdot;\mu,\lambda,G)
:
\mu>0,\ \lambda\ge 0,\ G \in \mathcal{G}
\right\}.
\]
That is, if
\[
a_0\,\phi(\cdot;0,1)
+
\sum_{j=1}^k a_j f_j(\cdot)
= 0
\quad \text{a.e.},
\]
for some \(k \in \mathbb{N}\), coefficients \(a_0,\dots,a_k \in \mathbb{R}\), and functions \(f_j \in \mathcal{F}_{\mathrm{SNSM}}^{+}\), then
\[
a_0 = a_1 = \cdots = a_k = 0.
\]

\item[(A2)] The support of \(G\) is contained in \([\ell, \infty)\) for some constant \(\ell > 0\).

\item[(A3)] The mixing distribution \(G\) satisfies
\[
\int_{\ell}^{\infty} \log \sigma \, dG(\sigma) < \infty.
\]
\end{enumerate}

The following theorem establishes identifiability of the parameters \((\pi,\mu,\lambda,G)\) under the separation condition in Assumption~(A1).

\begin{theorem}
\label{thm1}
Suppose Assumption~\textup{(A1)} holds. Let $\mu,\tilde{\mu}>0$, $\lambda,\tilde{\lambda}\ge 0$, and $G,\tilde{G}$ be probability measures on $(0,\infty)$. If
\[
\pi \phi(z;0,1) + (1-\pi)f_{\mathrm{SNSM}}(z;\mu,\lambda,G)
=
\tilde{\pi}\phi(z;0,1) + (1-\tilde{\pi})f_{\mathrm{SNSM}}(z;\tilde{\mu},\tilde{\lambda},\tilde{G})
\]
for almost all $z\in\mathbb{R}$, then
\[
(\pi,\mu,\lambda,G)=(\tilde{\pi},\tilde{\mu},\tilde{\lambda},\tilde{G}).
\]
\end{theorem}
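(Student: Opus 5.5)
Rearranging the assumed identity gives, for almost all $z$,
\[
(\pi-\tilde{\pi})\,\phi(z;0,1) + (1-\pi)\,f_{\mathrm{SNSM}}(z;\mu,\lambda,G) - (1-\tilde{\pi})\,f_{\mathrm{SNSM}}(z;\tilde{\mu},\tilde{\lambda},\tilde{G}) = 0 .
\]
This is a linear relation of the type in Assumption~(A1), with $k=2$, $f_1=f_{\mathrm{SNSM}}(\cdot;\mu,\lambda,G)$ and $f_2=f_{\mathrm{SNSM}}(\cdot;\tilde{\mu},\tilde{\lambda},\tilde{G})$. Both belong to $\mathcal{F}_{\mathrm{SNSM}}^{+}$ because $\mu,\tilde{\mu}>0$ and $\lambda,\tilde{\lambda}\ge 0$. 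The plan is first to deduce $\pi=\tilde{\pi}$, then $f_1=f_2$ almost everywhere, and finally to recover $(\mu,\lambda,G)$ from the common SNSM density.

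\emph{Step 1: $\pi=\tilde{\pi}$ and $f_1=f_2$.} If $f_1\neq f_2$ as elements of $L^1$, then (A1) applied with $k=2$ forces $\pi-\tilde{\pi}=0$ and $1-\pi=1-\tilde{\pi}=0$. This contradicts $\pi\in(0,1)$, so $f_1=f_2$ almost everywhere. The relation then reduces to $(\pi-\tilde{\pi})\phi+(\tilde{\pi}-\pi)f_1=0$. Applying (A1) with $k=1$ gives $\pi=\tilde{\pi}$; alternatively, integrate over $\mathbb{R}$ after noting $\phi\neq f_1$. The first two components of the parameter are therefore identified.

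\emph{Step 2: identifying $(\mu,\lambda,G)$ from $f_1=f_2$.} Both densities are continuous, so they agree everywhere, and I would compare characteristic functions or moment generating functions. Write $X=\mu+\sigma Y$, where $Y\sim\mathrm{SN}(0,\lambda,1)$ is independent of $\sigma\sim G$. The skew-normal has MGF $M_Y(t)=2e^{t^2/2}\Phi(\delta t)$ with $\delta=\lambda/\sqrt{1+\lambda^2}\in[0,1)$. Hence
\[
E e^{tX}=e^{\mu t}\int 2e^{\sigma^2t^2/2}\Phi(\delta\sigma t)\,dG(\sigma).
\]
This MGF may be infinite for $t\neq 0$ when $G$ has heavy tails, so I would work with characteristic functions. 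The even part is then
\[
\Re\bigl(e^{-i\mu t}\varphi_X(t)\bigr)=\int e^{-\sigma^2t^2/2}\,dG(\sigma).
\]
That is, the symmetrized law of $X-\mu$ is the Gaussian scale mixture with mixing law $G$. The difficulty is that $\mu$ is entangled with the odd part. A cleaner route is to look at the tails. As $z\to-\infty$, the factor $\Phi(\lambda(z-\mu)/\sigma)$ decays like a Gaussian whenever $\lambda>0$, while the right tail does not. This tail behaviour distinguishes $\lambda=0$ from $\lambda>0$, but it does not determine $\lambda$ or $\mu$ exactly, so I expect this step to be the main obstacle.

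For the main attempt, the natural first choice is the following observation. For fixed $(\mu,\lambda)$, the map $G\mapsto f_{\mathrm{SNSM}}$ is injective, because the even part of $f(\mu+\cdot)$ equals
\[
\int \sigma^{-1}\phi(\cdot/\sigma)\,dG(\sigma),
\]
and Gaussian scale mixtures are identifiable: the Laplace transform in $s=\sigma^2$ determines the law of $\sigma^2$. So it suffices to identify $(\mu,\lambda)$, and I would try three ways.
\begin{enumerate}
\item Let $h(u)=f(\mu+u)$. The odd part of $h$ is
\[
\int \frac{2}{\sigma}\phi(u/\sigma)\Bigl(\Phi(\lambda u/\sigma)-\tfrac12\Bigr)dG(\sigma),
\]
which vanishes at $u=0$. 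This links $\mu$ to a symmetry point of $f-\text{(even part)}$, but that point is not well defined without already knowing $\mu$.
\item Write $\log$ of the characteristic function and compare the behaviour of $\varphi_X(t)$ as $t\to\infty$. Here $\varphi_Y(t)=e^{-t^2/2}\bigl(1+i\,\tau(\delta t)\bigr)$ with $\tau$ an explicit imaginary-error-function term. Expanding near $t=0$ would match odd cumulants to $\mu$ and $\delta$ only when moments exist.
\item If neither of these closes, fall back on the reading that (A1), together with the linear-independence structure of the parametrized family, is meant to be invoked once more to separate distinct SNSM members.
\end{enumerate}
The first two components, $\pi$ and the equality of the alternative densities, follow cleanly from (A1). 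Recovering $(\mu,\lambda)$ jointly from a single SNSM density with arbitrary $G$ is the step I expect to require the most work.
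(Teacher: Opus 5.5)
Your Step~1 is fine and is essentially the paper's opening move. The paper uses (A1) only to conclude that the coefficient of $\phi(\cdot;0,1)$ vanishes. That gives $\pi=\tilde{\pi}$, and then $f_1=f_2$ a.e.\ after dividing by $1-\pi>0$. This weaker use is the safe one, because (A1) read literally (all coefficients zero) cannot hold: for example, $\tfrac{1}{2} f_{\mathrm{SNSM}}(\cdot;\mu,\lambda,G_1)+\tfrac{1}{2} f_{\mathrm{SNSM}}(\cdot;\mu,\lambda,G_2)-f_{\mathrm{SNSM}}(\cdot;\mu,\lambda,\tfrac{1}{2}G_1+\tfrac{1}{2}G_2)=0$. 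Also, integrating over $\mathbb{R}$ only gives $0=0$; it is $\phi\neq f_1$ that yields $\pi=\tilde{\pi}$.

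The genuine gap is Step~2: you never establish $\mu=\tilde{\mu}$ or $\lambda=\tilde{\lambda}$. Options~1 and~2 are left unfinished. Option~3 is not available, because (A1) only says that $\phi(\cdot;0,1)$ is linearly independent of $\mathcal{F}^{+}_{\mathrm{SNSM}}$. Once $f_1=f_2$ as functions, (A1) carries no information about which parameter values generate that function.

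The missing idea is one you almost wrote down. Let $\psi_{G,\lambda}(t)=E\exp\{it(X-\mu)\}$. Its real part is the Fourier transform of the even part of the density of $X-\mu$, so
\[
\Re\,\psi_{G,\lambda}(t)=\int \exp(-\sigma^2t^2/2)\,dG(\sigma)>0 \qquad \text{for every } t\in\mathbb{R},
\]
and the same holds for $\psi_{\tilde G,\tilde\lambda}$. The equality $f_1=f_2$ gives $e^{i\mu t}\psi_{G,\lambda}(t)=e^{i\tilde\mu t}\psi_{\tilde G,\tilde\lambda}(t)$ for all $t$. If $\mu\neq\tilde\mu$, pick $t_0$ with $\exp\{i(\tilde\mu-\mu)t_0\}=-1$. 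Then $\psi_{G,\lambda}(t_0)=-\psi_{\tilde G,\tilde\lambda}(t_0)$, so the two real parts have opposite signs, which is a contradiction. Hence $\mu=\tilde\mu$ and $\psi_{G,\lambda}=\psi_{\tilde G,\tilde\lambda}$.

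Equal real parts then give $G=\tilde G$ by your Laplace-transform argument. Finally, $\lambda=\tilde\lambda$ follows because $P(X>\mu)=\int_0^\infty 2\phi(y)\Phi(\lambda y)\,dy$ does not depend on $G$ and is strictly increasing in $\lambda$.

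This completed argument follows the skeleton of the paper's proof: characteristic functions, $\mu$ first, then the real part for $G$ via Laplace-transform uniqueness, then the imaginary part for $\lambda$. The paper justifies $\mu=\tilde\mu$ only by asserting that zero-location SNSM laws cannot differ by a nonzero translation; positivity of the real part is what actually proves that assertion. Likewise, monotonicity in $\lambda$ is what makes the paper's appeal to injectivity of $e(\cdot)$ valid inside an integral against $G$.
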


\begin{proof}
A proof is given in Appendix~A.
\end{proof}

\begin{remark}
When $\lambda=0$ and $G$ is degenerate at a fixed scale value, model~\eqref{eq:proposed_model} reduces to the standard two-component Gaussian mixture model with distinct means, as given in~\eqref{gmm}.  
Thus, Theorem~\ref{thm1} encompasses the classical two-component Gaussian mixture as a special case.
\end{remark}

\begin{remark}
Because the $z$-scores in~\eqref{eq:ztransform} assign larger positive values to stronger evidence against the null, we restrict attention to the right-tailed setting with $\mu>0$. More generally, left-tailed alternatives can also be accommodated by allowing $\mu<0$ and aligning the direction of skewness through the constraint 
$\lambda \ge 0$ when $\mu>0$ and $\lambda \le 0$ when $\mu<0$.
\end{remark}

The general consistency of maximum likelihood estimators in semiparametric mixture models was demonstrated by \citet{kiefer1956consistency}.  
By verifying their regularity conditions for model~\eqref{eq:proposed_model}, we obtain the following consistency result for the estimators of $(\pi,\mu,\lambda,G)$.

\begin{theorem}
\label{thm2}
Suppose that Assumptions~\textup{(A1)}--\textup{(A3)} hold. Then the MLE \((\hat{\pi}, \hat{\mu}, \hat{\lambda}, \hat{G})^\top\) is consistent for \((\pi, \mu, \lambda, G)^\top\).
\end{theorem}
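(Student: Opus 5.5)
The plan is to verify the conditions of \citet{kiefer1956consistency} for the family of densities $f(z;\pi,\mu,\lambda,G)$ in \eqref{eq:proposed_model}. The parameter space is $\Theta=\{(\pi,\mu,\lambda,G)\}$, where $G$ ranges over probability measures supported in $[\ell,\infty)$. We compactify it as follows:
\begin{itemize}
\item $\pi\in[0,1]$, $\mu\in[0,\infty]$ and $\lambda\in[0,\infty]$;
\item $G$ is taken in the vague topology on sub-probability measures on $[\ell,\infty]$, which is compact and metrizable. Mass escaping to $\sigma=\infty$ contributes a zero density, since $\frac{2}{\sigma}\phi(\cdot)\Phi(\cdot)\to 0$ uniformly in $z$ as $\sigma\to\infty$.
\end{itemize}
The Kiefer--Wolfowitz conditions to check are: (i) identifiability on the original space, which is Theorem~\ref{thm1} and rests on (A1); (ii) continuity of $\theta\mapsto f(z;\theta)$ on the compactified space for each $z$; (iii) the integrability condition $E_{\theta_0}\log\{f(Z;\theta_0)\}<\infty$ together with $E_{\theta_0}\sup_{\theta'\in N(\theta,r)}\log\{f(Z;\theta')/f(Z;\theta_0)\}^+<\infty$ for small neighbourhoods $N(\theta,r)$; (iv) for limit points added by compactification, the limiting density must differ from the true one on a set of positive measure.

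\textbf{Continuity (ii).} For fixed $z$, the kernel $k(z;\mu,\lambda,\sigma)=\frac{2}{\sigma}\phi\{(z-\mu)/\sigma\}\Phi\{\lambda(z-\mu)/\sigma\}$ is jointly continuous in $(\mu,\lambda,\sigma)$ on $[0,\infty)\times[0,\infty]\times[\ell,\infty]$. At $\lambda=\infty$ it is interpreted as the half-normal limit, and at $\sigma=\infty$ it equals $0$. By (A2), $0\le k\le 2/(\ell\sqrt{2\pi})$. Hence, if $G_n\to G$ vaguely with $(\mu_n,\lambda_n)\to(\mu,\lambda)$, then $\int k\,dG_n\to\int k\,dG$. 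This uses the fact that $k$ extends continuously to the compactified $\sigma$-space, which vanishes at infinity. The mixture with $\phi$ inherits this continuity. For $\mu\to\infty$ the density tends to $0$ pointwise, so the limit point is a defective density.

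\textbf{Integrability (iii).} The uniform bound $f(z;\theta)\le \max\{1,2/\ell\}/\sqrt{2\pi}=:C$ gives $\sup_\theta\log f(z;\theta)\le\log C$, which handles the positive-part condition. For the lower bound on $E\log f(Z;\theta_0)$, use $f\ge \pi_0\phi(z)$ when $\pi_0>0$. If $\pi_0=0$ is not excluded, instead use Jensen: for $\sigma\ge\ell$,
\[
\log f_{\mathrm{SNSM}}(z)\ \ge\ \int\Bigl[-\log\sigma-\tfrac{(z-\mu)^2}{2\ell^2}+\log\{2\Phi(\lambda(z-\mu)/\sigma)\}\Bigr]dG-\tfrac12\log 2\pi .
\]
Here (A3) controls the $\log\sigma$ term, and the $\log\Phi$ term behaves like $-\lambda^2(z-\mu)^2/(2\ell^2)$ in the left tail. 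So finiteness reduces to $E_{\theta_0}Z^2<\infty$ on the relevant events. I expect this step to need the most care, since the SNSM may have no finite second moment if $G$ is heavy-tailed. I would try first to avoid moments entirely by using $\pi_0\in(0,1)$ and the bound $\log f\ge\log\pi_0+\log\phi(z)$. This still requires $E Z^2<\infty$, so the fallback is a truncation: Kiefer--Wolfowitz only needs $E\log f(Z;\theta_0)>-\infty$, and $\log f\ge\log\{\pi_0\phi(z)+(1-\pi_0)f_{\mathrm{SNSM}}\}$. Under the true model, $E[-\log f(Z)]=-\int f\log f$ is the differential entropy. This is finite because $f$ is bounded above and has a finite logarithmic moment of $|Z|$, which follows from (A3) since $|Z|\lesssim\sigma|W|$.

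\textbf{Limit points (iv) and conclusion.} Points with $\mu=0$, or with $\pi\in\{0,1\}$, either correspond to proper densities, for which I would extend the argument of Theorem~\ref{thm1} directly, or to defective ones, which cannot coincide with the true proper density $f(\cdot;\theta_0)$. Kiefer--Wolfowitz then yields that any sequence of approximate MLEs converges to $\theta_0$ in the compactified metric. This gives $\hat\pi\to\pi$, $\hat\mu\to\mu$, $\hat\lambda\to\lambda$ and $\hat G\to G$ weakly, where weak convergence follows from vague convergence because the limit $G$ is a probability measure.
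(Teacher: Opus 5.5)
Your overall route is the paper's. You verify the Kiefer--Wolfowitz conditions, take identifiability from Theorem~\ref{thm1}, and use (A2) for the uniform bound $f\le \max\{1,2/\ell\}/\sqrt{2\pi}$. You then get $E_{\theta_0}\log f(Z;\theta_0)>-\infty$ from boundedness plus the finite logarithmic moment that (A3) supplies, via the Kiefer--Wolfowitz lemma. Your final ``fallback'' is exactly the paper's verification of their Assumption~5. You are right to drop the $\pi_0\phi$ and Jensen lower bounds, since they would need $E Z^2<\infty$, which (A3) does not give. Your explicit compactification and continuity check go beyond the paper, which calls the first three conditions trivial and never looks at the limit points added by compactification.

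That extra care exposes one step that fails as planned. For limit points with $\mu=0$ you propose to ``extend the argument of Theorem~\ref{thm1} directly''. That argument starts by using (A1) to force $\pi=\tilde\pi$, but (A1) concerns only $\mu>0$ and cannot be extended to $\mu=0$: for $\ell\le 1$ the closure contains $\phi$ itself ($\lambda=0$, $G=H_1$). The Wald-type argument only needs that no $\mu=0$ point reproduces the true density, and this has a short separate proof.

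The characteristic function of $f_{\mathrm{SNSM}}(\cdot;0,\lambda,G)$ is $\int e^{-\sigma^2t^2/2}\{1+i\,h(\delta\sigma t)\}\,dG(\sigma)$. Here $h(x)=\sqrt{2/\pi}\int_0^x e^{v^2/2}\,dv$ is odd and increasing, and $\delta=\lambda/\sqrt{1+\lambda^2}\in[0,1]$, with $\delta=1$ at $\lambda=\infty$. Hence every $\pi_1\phi+(1-\pi_1)f_{\mathrm{SNSM}}(\cdot;0,\lambda_1,G_1)$ has $\mathrm{Im}\,\psi(t)\ge 0$ for all $t>0$.

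For the true density, $\mathrm{Im}\,\psi(t)=(1-\pi_0)\{\sin(\mu_0 t)A_0(t)+\cos(\mu_0 t)B_0(t)\}$, where $A_0(t)=\int e^{-\sigma^2t^2/2}\,dG_0(\sigma)>0$ and $B_0(t)=\int e^{-\sigma^2t^2/2}h(\delta_0\sigma t)\,dG_0(\sigma)\ge 0$. This is strictly negative when $\mu_0 t\in(\pi,3\pi/2)$, so no $\mu=0$ point matches the truth.

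Two smaller points. Limit points with $\lambda=\infty$ are excluded not by Theorem~\ref{thm1} but because the half-normal kernel jumps at $z=\mu$ while the true density is continuous. Also, your kernel is not jointly continuous at $(\mu,\lambda)=(z,\infty)$; this is harmless only because Kiefer--Wolfowitz allow a null exceptional set of $z$ for each limit point. With these repairs your version is more complete than the paper's own proof.
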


\begin{proof}
A proof is given in the Appendix B. 
\end{proof}

\subsection{Expectation–Conditional Maximization Algorithm} \label{sec:em}

Let $z_i \in \mathbb{R}$ denote the $z$-score for gene $i$, for $i=1,\dots,N$.  
The observed log-likelihood corresponding to the two-component mixture model in~\eqref{eq:proposed_model} is
\begin{align}
\label{likelihood}
    \ell(\thetab,G)
    \;=\;
    \sum_{i=1}^{N}
    \log \Big\{
        \pi\,\phi(z_i;0,1)
        + (1-\pi)\, f_{\mathrm{SNSM}}(z_i;\mu,\lambda,G)
    \Big\},
\end{align}
where $\thetab=(\pi,\mu,\lambda)$.  
Maximizing~\eqref{likelihood} is challenging due to the infinite-dimensional parameter $G$.  
We therefore adopt the expectation–conditional maximization (ECM) algorithm of \citet{meng1993maximum}.

Introduce the latent variable $\gamma\in\{0,1\}$, where $\gamma=1$ if $Z$ arises from the null component and $\gamma=0$ otherwise.  
The complete log-likelihood is then
\begin{align}
\label{complete}
\ell_c(\thetab,G)
=\sum_{i=1}^{N}
    \gamma_i \log\!\left\{ \pi\,\phi(z_i;0,1) \right\}
  +\sum_{i=1}^{N}
    (1-\gamma_i)\log\!\left\{ (1-\pi)\,f_{\mathrm{SNSM}}(z_i;\mu,\lambda,G) \right\}.
\end{align}

\paragraph{E-step.}
Given current estimates $(\thetab^{(t)},G^{(t)})$, the conditional expectation of~\eqref{complete} is
\[
Q(\thetab,G \mid \thetab^{(t)},G^{(t)})
=
\mathbb{E}\!\left[\ell_c(\thetab,G)\mid z_1,\dots,z_N,\thetab^{(t)},G^{(t)}\right],
\]
where the posterior probability that observation $z_i$ belongs to the null component is
\[
\gamma_i^{(t)}
=\Pr(U_i=1\mid z_i,\thetab^{(t)},G^{(t)})
=
\frac{
\pi^{(t)} \phi(z_i;0,1)
}{
\pi^{(t)}\phi(z_i;0,1)
+
(1-\pi^{(t)})\!
\displaystyle\int
\frac{2}{\sigma}
\phi\!\Big(\tfrac{z_i-\mu^{(t)}}{\sigma}\Big)
\Phi\!\Big(\lambda^{(t)}\tfrac{z_i-\mu^{(t)}}{\sigma}\Big)
\, dG^{(t)}(\sigma)
}.
\]

Substituting $\gamma_i^{(t)}$ into the expectation gives
\begin{align*}
Q(\thetab,G)
&=
\sum_{i=1}^{N}
\gamma_i^{(t)}\big\{\log\pi+\log\phi(z_i;0,1)\big\}
+
\sum_{i=1}^{N}
(1-\gamma_i^{(t)})\log(1-\pi)
\\
&\quad+
\sum_{i=1}^{N}
(1-\gamma_i^{(t)})\,
\log\!\left\{
\int
\frac{2}{\sigma}
\phi\!\Big(\tfrac{z_i-\mu}{\sigma}\Big)
\Phi\!\Big(\lambda\tfrac{z_i-\mu}{\sigma}\Big)
dG(\sigma)
\right\}.
\end{align*}

\paragraph{CM-step 1: Updating $G$ (NPMLE).}
Conditioning on $\thetab^{(t)}$, the update of $G$ requires maximizing
\[
\tilde{\ell}_N(G)
=
\sum_{i=1}^{N}
(1-\gamma_i^{(t)})
\log\!\left\{
\int
\frac{2}{\sigma}
\phi\!\Big(\tfrac{z_i-\mu^{(t)}}{\sigma}\Big)
\Phi\!\Big(\lambda^{(t)}\tfrac{z_i-\mu^{(t)}}{\sigma}\Big)
dG(\sigma)
\right\}.
\]
This is a convex but infinite-dimensional optimization problem.  
We employ directional-derivative based algorithms such as the vertex direction method \citep{Bohning85}, vertex exchange method \citep{Bohning86}, intra-simplex direction method \citep{LK92}, or constrained Newton method (CNM; \citealp{Wang07}).
The directional derivative at $G^{(t)}$ in the direction of $H_\sigma$ is
\[
D_{G^{(t)}}(\sigma)
=
\sum_{i=1}^{N}
(1-\gamma_i^{(t)})
\left[
\frac{
\dfrac{2}{\sigma}
\phi\!\big(\tfrac{z_i-\mu^{(t)}}{\sigma}\big)
\Phi\!\big(\lambda^{(t)}\tfrac{z_i-\mu^{(t)}}{\sigma}\big)
}{
\displaystyle\int
\frac{2}{s}
\phi\!\big(\tfrac{z_i-\mu^{(t)}}{s}\big)
\Phi\!\big(\lambda^{(t)}\tfrac{z_i-\mu^{(t)}}{s}\big)
dG^{(t)}(s)
}
\right]
-
\sum_{i=1}^{N}
(1-\gamma_i^{(t)}).
\]
Updates proceed along directions with $D_{G^{(t)}}(\sigma)>0$ until the NPMLE $G^{(t+1)}$ satisfies the optimality conditions of \citet{Lindsay_1995}.

\paragraph{CM-step 2: Updating $(\pi,\mu,\lambda)$.}
With $G^{(t+1)}$ fixed, the mixing proportion is updated as
\[
\pi^{(t+1)}
=
\frac{1}{N}\sum_{i=1}^{N}\gamma_i^{(t)}.
\]
The remaining parameters are updated by solving
\[
(\mu^{(t+1)},\lambda^{(t+1)})
=
\arg\max_{\mu,\lambda}
\sum_{i=1}^{N}
(1-\gamma_i^{(t)})
\log
\left\{
\int
\frac{2}{\sigma}
\phi\!\Big(\tfrac{z_i-\mu}{\sigma}\Big)
\Phi\!\Big(\lambda\tfrac{z_i-\mu}{\sigma}\Big)
dG^{(t+1)}(\sigma)
\right\},
\]
which can be efficiently performed using the Broyden–Fletcher–Goldfarb–Shanno (BFGS) algorithm  
\citep{broyden1970convergence, fletcher1970new, goldfarb1970family, shanno1970conditioning}.

\medskip\noindent
The E-step and the two CM-steps are iterated until convergence of the observed log-likelihood.

\section{Simulation Study} \label{sec:sim}

We conduct a comprehensive Monte Carlo simulation study to evaluate the performance of three approaches for distinguishing differentially expressed genes from non–differentially expressed ones:
(i) the parametric two-component Gaussian mixture model;
(ii) the nonparametric semiparametric two-component method, which models the alternative density nonparametrically under a symmetry constraint; and
(iii) the proposed semiparametric SNSM mixture model, which allows skewness and heavy-tailed behavior through an unspecified scale mixing distribution.
Throughout this section, these three approaches are referred to as the parametric, nonparametric, and semiparametric methods, respectively, reflecting the degree of assumptions placed on the alternative distribution.

The aim of the simulation study is to examine how these three methods perform under a broad range of alternative distributions differing in symmetry, skewness, and tail weight, and under varying degrees of sparsity in the non-null component and different total numbers of genes. Each synthetic dataset consists of independent observations
\[
Z_i \sim \pi\,\mathcal{N}(0,1) \;+\; (1-\pi)\,f_1, \qquad i=1,\dots,N,
\]
where the null component is standard normal and $f_1$ denotes the alternative (non-null) distribution.  
To capture a diverse set of realistic scenarios, we consider six representative choices for $f_1$:
\begin{align*}
\text{Case I:}\;& f_1=\mathcal{N}(\mu,1) 
&& \text{(symmetric, light-tailed)},\\[2pt]
\text{Case II:}\;& f_1=0.9\,\mathcal{N}(\mu,1)+0.1\,\mathcal{N}(\mu,2) 
&& \text{(symmetric, moderately heavy-tailed)},\\[2pt]
\text{Case III:}\;& f_1=\mathrm{Laplace}(\mu,1) 
&& \text{(symmetric, heavy-tailed)},\\[2pt]
\text{Case IV:}\;& f_1=\mathrm{t}(\mu,\nu=10)
&& \text{(symmetric, heavy-tailed)},\\[2pt]
\text{Case V:}\;& f_1=\mathrm{SN}(\mu,1,\lambda=5)
&& \text{(asymmetric)},\\[2pt]
\text{Case VI:}\;& f_1=\mathrm{Skew}\text{-}t(\mu,1,\lambda=5,\nu=10)
&& \text{(asymmetric and heavy-tailed)}.
\end{align*}
The alternative location is set to $\mu=1.645$, corresponding to the $95\%$ quantile of the standard normal distribution, providing a moderately separated but non-extreme signal strength on the z-scale.

To investigate the effects of sparsity and dimensionality, we vary the null proportion and the total number of genes as
\[
\pi \in \{0.3,\,0.5,\,0.7\}, 
\qquad 
N \in \{1000,\,5000\}.
\]
Larger values of $\pi$ correspond to sparser settings, with $\pi=0.7$ indicating that most genes follow the null distribution, while $\pi=0.3$ yields a comparatively dense non-null setting. For each combination of $(f_1,\pi,N)$, we generate 200 independent replications to obtain stable Monte Carlo performance measures. 

Because the true component membership is known in simulation, clustering accuracy is evaluated by classifying each observation as null or non-null according to its posterior probability, using the maximum a posteriori (MAP) rule with threshold $\gamma(z) = 0.5$. Performance is then quantified using the Adjusted Rand Index (ARI; \citealp{hubert1985comparing}) and Adjusted Mutual Information (AMI; \citealp{vinh2010information}), computed across the 200 replications. Both ARI and AMI correct for chance agreement, with higher values indicating better alignment between estimated and true classifications.

Across all simulation settings, the proposed semiparametric approach consistently attains the highest ARI and AMI values, demonstrating superior accuracy in recovering the true null and non-null classifications. Under symmetric and light-tailed alternatives (Case~I), the three methods perform similarly, though the proposed estimator retains a slight but systematic advantage. In symmetric yet heavy-tailed settings (Cases~II–IV), the parametric Gaussian mixture deteriorates substantially due to tail misspecification, and while the nonparametric approach improves upon the parametric method, it remains clearly inferior to the proposed estimator, which more effectively accounts for scale heterogeneity through its flexible mixing distribution. Under asymmetric alternatives (Cases~V–VI), the superiority of the proposed method is most striking, as neither the parametric nor the symmetric nonparametric model can accommodate skewness. Increasing the sample size from $N=1000$ to $N=5000$ improves the performance of all methods, but the semiparametric estimator continues to dominate, underscoring its robustness and adaptability across symmetric, asymmetric, light-tailed, and heavy-tailed alternatives.

\begin{table}[h]
\caption{ARI and AMI under each simulation scenario when $N=1000$ (Boldface indicates the best performance in each setting)}
\label{tab:p1000} 
\centering
\begin{tabular}{c c c c c c c c} 
\hline \hline \noalign{\smallskip}
\multirow{2}{*}{Case} &
\multirow{2}{*}{$\pi$} &
\multicolumn{2}{c}{Parametric method} &
\multicolumn{2}{c}{Nonparametric method} &
\multicolumn{2}{c}{Semiparametric method} \\
 & & ARI & AMI & ARI & AMI & ARI & AMI \\
\hline
\multirow{3}{*}{$\uppercase\expandafter{\romannumeral1}$} 
 & $0.3$ & 0.3717 & 0.2336 & 0.3649 & 0.2292 & $\boldsymbol{0.3757}$ & $\boldsymbol{0.2471}$ \\
 & $0.5$ & 0.3348 & 0.2627  & 0.3144 & 0.2463 & $\boldsymbol{0.3363}$ & $\boldsymbol{0.2642}$ \\
 & $0.7$ & 0.3799 & 0.2426  & 0.3522 & 0.2287  & $\boldsymbol{0.3901}$ & $\boldsymbol{0.2468}$ \\
\hline
\multirow{3}{*}{$\uppercase\expandafter{\romannumeral2}$} 
 & $0.3$ & 0.2036 & 0.1249 & 0.3178 & 0.1977 & $\boldsymbol{0.3569}$ & $\boldsymbol{0.2303}$ \\
 & $0.5$ & 0.1808 & 0.1493  & 0.2699 & 0.2132 & $\boldsymbol{0.3191}$ & $\boldsymbol{0.2521}$ \\
 & $0.7$ & 0.3426 & 0.2325  & 0.3476 & 0.2253  & $\boldsymbol{0.3852}$ & $\boldsymbol{0.2412}$ \\
\hline
\multirow{3}{*}{$\uppercase\expandafter{\romannumeral3}$} 
 & $0.3$ & 0.3896 & 0.2495 & 0.4260 & 0.2845 & $\boldsymbol{0.4345}$ & $\boldsymbol{0.2987}$ \\
 & $0.5$ & 0.3081 & 0.2473  & 0.3941 & 0.3095 & $\boldsymbol{0.4049}$ & $\boldsymbol{0.3180}$ \\
 & $0.7$ & 0.3053 & 0.2275  & 0.3950 & 0.2671  & $\boldsymbol{0.4390}$ & $\boldsymbol{0.2926}$ \\
\hline
\multirow{3}{*}{$\uppercase\expandafter{\romannumeral4}$} 
 & $0.3$ & 0.0035 & 0.0021 & 0.0659 & 0.0429 & $\boldsymbol{0.3579}$ & $\boldsymbol{0.2252}$ \\
 & $0.5$ & 0.1661 & 0.1581  & 0.2111 & 0.1821 & $\boldsymbol{0.3374}$ & $\boldsymbol{0.2644}$ \\
 & $0.7$ & 0.3778 & 0.2459  & 0.3442 & 0.2301  & $\boldsymbol{0.3868}$ & $\boldsymbol{0.2449}$ \\
\hline
\multirow{3}{*}{$\uppercase\expandafter{\romannumeral5}$} 
 & $0.3$ & 0.8287 & 0.7312 & 0.8534 & 0.7604 & $\boldsymbol{0.8805}$ & $\boldsymbol{0.7868}$ \\
 & $0.5$ & 0.7976 & 0.7434  & 0.7427 & 0.6898 & $\boldsymbol{0.8312}$ & $\boldsymbol{0.7567}$ \\
 & $0.7$ & 0.7903 & 0.6839  & 0.7467 & 0.6439  & $\boldsymbol{0.7964}$ & $\boldsymbol{0.6769}$ \\
\hline
\multirow{3}{*}{$\uppercase\expandafter{\romannumeral6}$} 
 & $0.3$ & 0.7892 & 0.6846 & 0.8401 & 0.7420 & $\boldsymbol{0.8753}$ & $\boldsymbol{0.7779}$ \\
 & $0.5$ & 0.7672 & 0.7158  & 0.7434 & 0.6882 & $\boldsymbol{0.8281}$ & $\boldsymbol{0.7510}$ \\
 & $0.7$ & 0.7798 & 0.6778  & 0.7299 & 0.6291  & $\boldsymbol{0.7986}$ & $\boldsymbol{0.6787}$ \\
\noalign{\smallskip}\hline\noalign{\smallskip}
\end{tabular}
\end{table}

\begin{table}[h]
\caption{ARI and AMI under each simulation scenario when $N=5000$ (Boldface indicates the best performance in each setting)}
\label{tab:p5000} 
\centering
\begin{tabular}{c c c c c c c c} 
\hline \hline \noalign{\smallskip}
\multirow{2}{*}{Case} &
\multirow{2}{*}{$\pi$} &
\multicolumn{2}{c}{Parametric method} &
\multicolumn{2}{c}{Nonparametric method} &
\multicolumn{2}{c}{Semiparametric method} \\
 & & ARI & AMI & ARI & AMI & ARI & AMI \\
\hline
\multirow{3}{*}{$\uppercase\expandafter{\romannumeral1}$} 
 & $0.3$ & $\boldsymbol{0.3907}$ & 0.2452 & 0.3789 & 0.2380 & 0.3869 & $\boldsymbol{0.2541}$ \\
 & $0.5$ & $\boldsymbol{0.3454}$ & $\boldsymbol{0.2666}$  & 0.3383 & 0.2612 & 0.3403 & 0.2650 \\
 & $0.7$ & 0.3918 & 0.2456  & 0.3470 & 0.2252  & $\boldsymbol{0.3926}$ & $\boldsymbol{0.2471}$ \\
\hline
\multirow{3}{*}{$\uppercase\expandafter{\romannumeral2}$} 
 & $0.3$ & 0.1369 & 0.0812 & 0.3503 & 0.2155 & $\boldsymbol{0.3640}$ & $\boldsymbol{0.2318}$ \\
 & $0.5$ & 0.1069 & 0.0864  & 0.2956 & 0.2280 & $\boldsymbol{0.3263}$ & $\boldsymbol{0.2531}$ \\
 & $0.7$ & 0.3385 & 0.2287  & 0.3585 & 0.2287  & $\boldsymbol{0.3820}$ & $\boldsymbol{0.2358}$ \\
\hline
\multirow{3}{*}{$\uppercase\expandafter{\romannumeral3}$} 
 & $0.3$ & 0.4079 & 0.2581 & 0.4273 & 0.2869 & $\boldsymbol{0.4328}$ & $\boldsymbol{0.2975}$ \\
 & $0.5$ & 0.3590 & 0.2870  & 0.3742 & 0.2935 & $\boldsymbol{0.4115}$ & $\boldsymbol{0.3224}$ \\
 & $0.7$ & 0.2201 & 0.1727  & 0.4256 & 0.2869  & $\boldsymbol{0.4403}$ & $\boldsymbol{0.2917}$ \\
\hline
\multirow{3}{*}{$\uppercase\expandafter{\romannumeral4}$} 
 & $0.3$ & 0.0000 & 0.0000 & 0.0202 & 0.0164 & $\boldsymbol{0.3841}$ & $\boldsymbol{0.2399}$ \\
 & $0.5$ & 0.1773 & 0.1845  & 0.2453 & 0.2164 & $\boldsymbol{0.3460}$ & $\boldsymbol{0.2673}$ \\
 & $0.7$ & $\boldsymbol{0.3933}$ & $\boldsymbol{0.2548}$  & 0.3774 & 0.2462  & 0.3931 & 0.2467 \\
\hline
\multirow{3}{*}{$\uppercase\expandafter{\romannumeral5}$} 
 & $0.3$ & 0.8271 & 0.7287 & 0.8687 & 0.7758 & $\boldsymbol{0.8804}$ & $\boldsymbol{0.7846}$ \\
 & $0.5$ & 0.7958 & 0.7402  & 0.7385 & 0.6805 & $\boldsymbol{0.8316}$ & $\boldsymbol{0.7551}$ \\
 & $0.7$ & 0.7911 & 0.6836  & 0.7443 & 0.6387  & $\boldsymbol{0.7984}$ & $\boldsymbol{0.6773}$ \\
\hline
\multirow{3}{*}{$\uppercase\expandafter{\romannumeral6}$} 
 & $0.3$ & 0.7937 & 0.6890 & 0.8600 & 0.7625 & $\boldsymbol{0.8766}$ & $\boldsymbol{0.7776}$ \\
 & $0.5$ & 0.7699 & 0.7169  & 0.7740 & 0.7112 & $\boldsymbol{0.8298}$ & $\boldsymbol{0.7507}$ \\
 & $0.7$ & 0.7798 & $\boldsymbol{0.6757}$  & 0.6308 & 0.5403  & $\boldsymbol{0.7971}$ & 0.6750 \\
\noalign{\smallskip}\hline\noalign{\smallskip}
\end{tabular}
\end{table}

Figures~1--6 present the estimated null, alternative, and mixture densities obtained from the parametric, nonparametric, and proposed semiparametric approaches across the six simulation scenarios. In the normal setting (Case~I), all three methods recover the overall mixture shape reasonably well. As the alternative distribution becomes heavier–tailed (Cases~II--IV), the parametric method exhibits pronounced distortions in the tails, reflecting its sensitivity to misspecification, while the nonparametric method improves upon the parametric fit but remains limited in accommodating tail heterogeneity. In contrast, the proposed method successfully adapts to variations in tail weight and shape by learning the underlying scale mixing distribution. Under asymmetric alternatives (Cases~V--VI), the limitations of the competing methods become more apparent: the parametric method enforces symmetry and fails to capture right–skewed structure, and the nonparametric method cannot represent directional asymmetry. The proposed semiparametric approach, however, accurately recovers both skewness and heavy–tailed behavior, providing the closest match to the true alternative density across all scenarios.

\begin{figure}[p] 
\centering

\subfigure[Parametric method]{
\includegraphics[width=0.31 \linewidth]{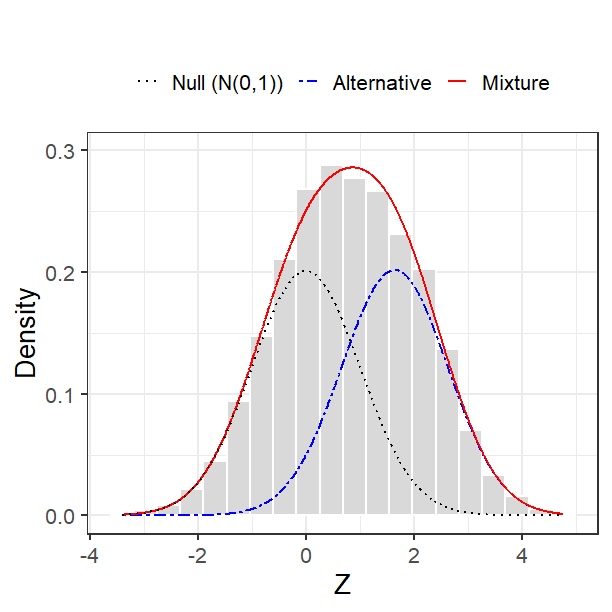}
}
\subfigure[Nonparametric method]{
\includegraphics[width=0.31 \linewidth]{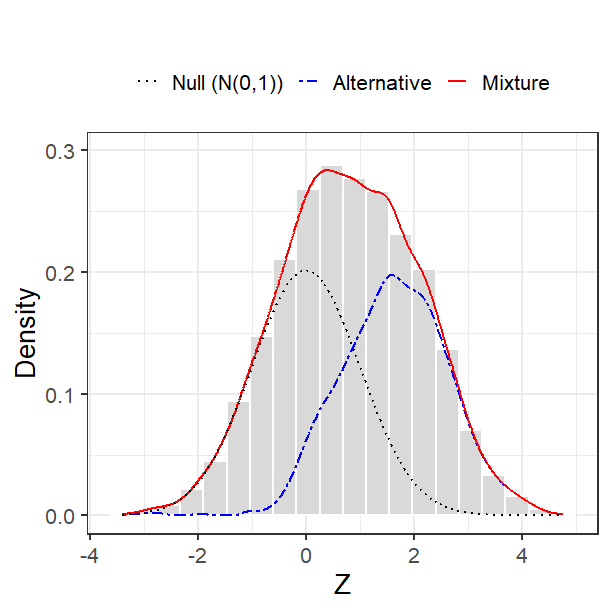}
}
\subfigure[Semiparametric method]{
\includegraphics[width=0.31 \linewidth]{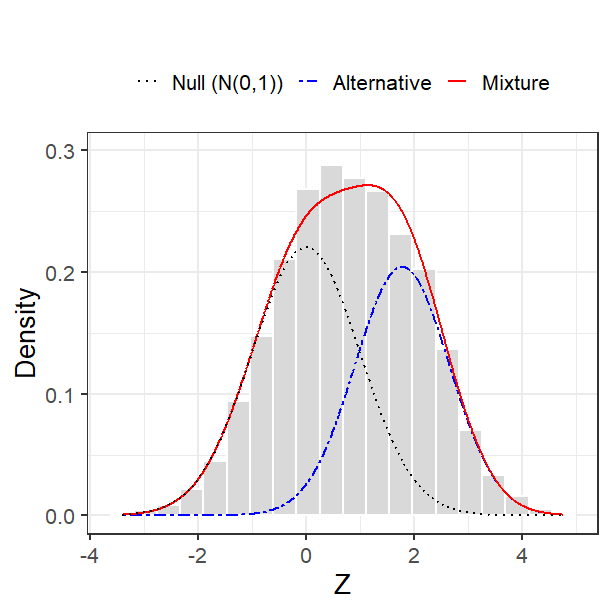}
}

\caption{Estimated density from each method with one simulated sample of Case $\uppercase\expandafter{\romannumeral1}$ when $N = 5000$}
\label{fig: Case 1}
\end{figure}

\begin{figure}[p] 
\centering

\subfigure[Parametric method]{
\includegraphics[width=0.31 \linewidth]{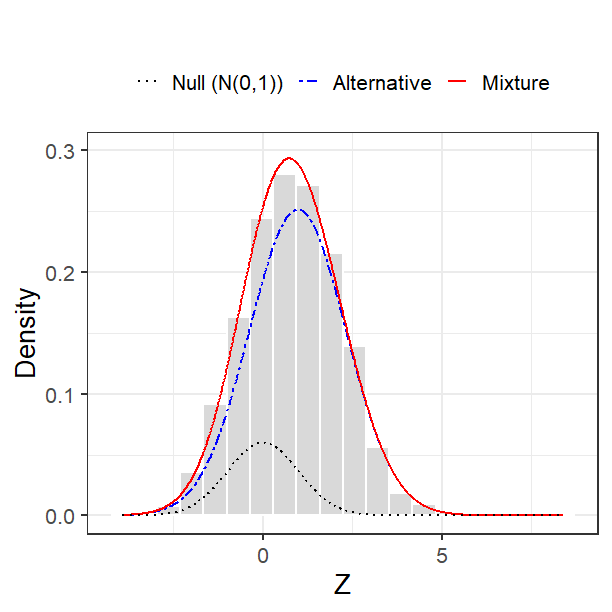}
}
\subfigure[Nonparametric method]{
\includegraphics[width=0.31 \linewidth]{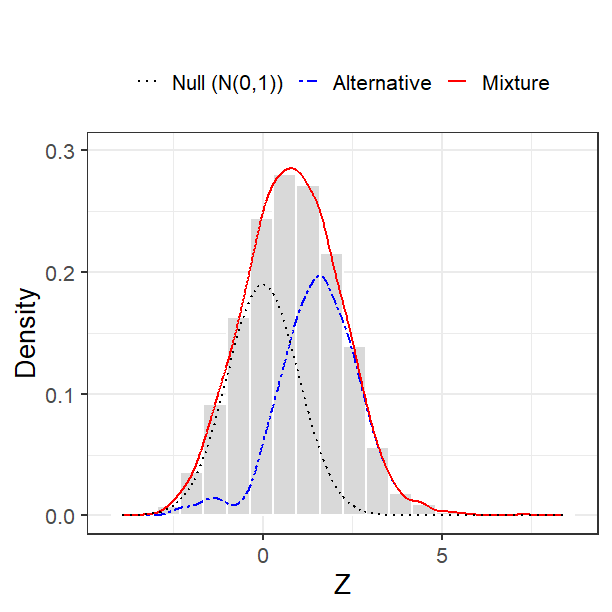}
}
\subfigure[Semiparametric method]{
\includegraphics[width=0.31 \linewidth]{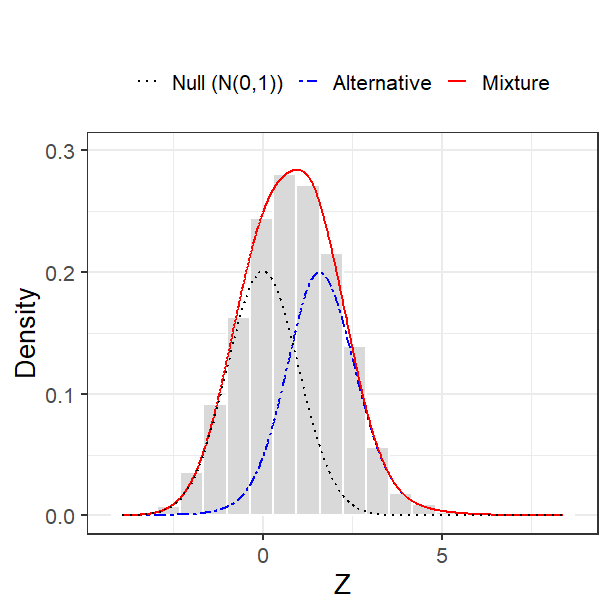}
}

\caption{Estimated density from each method with one simulated sample of Case $\uppercase\expandafter{\romannumeral2}$ when $N = 5000$}
\label{fig: Case 2}
\end{figure}

\begin{figure}[p] 
\centering

\subfigure[Parametric method]{
\includegraphics[width=0.31 \linewidth]{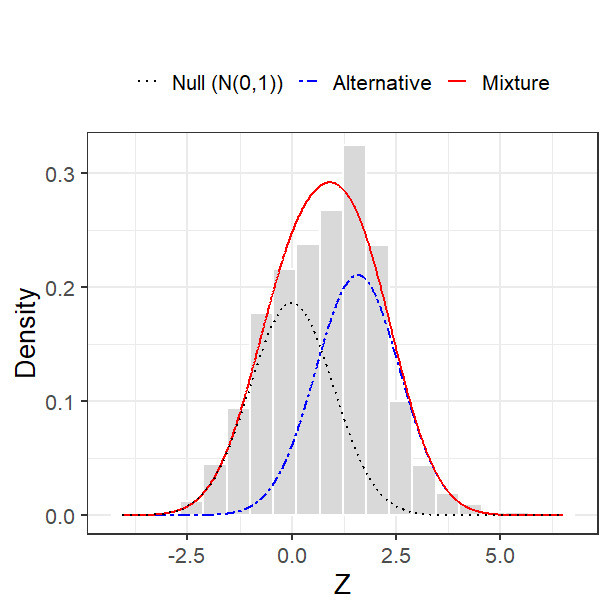}
}
\subfigure[Nonparametric method]{
\includegraphics[width=0.31 \linewidth]{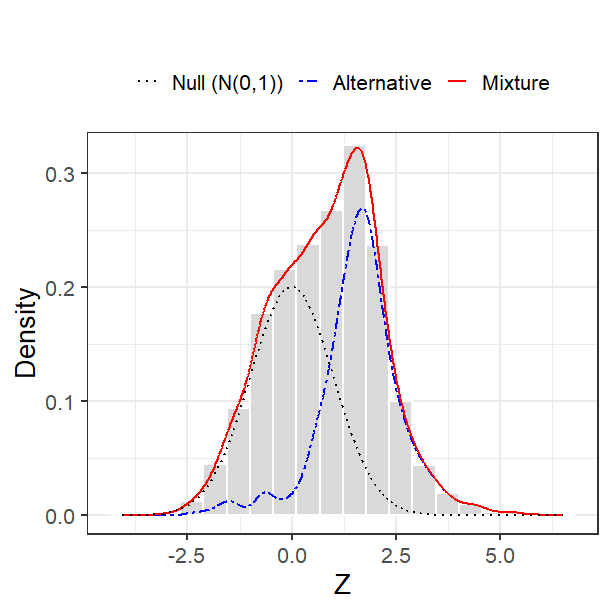}
}
\subfigure[Semiparametric method]{
\includegraphics[width=0.31 \linewidth]{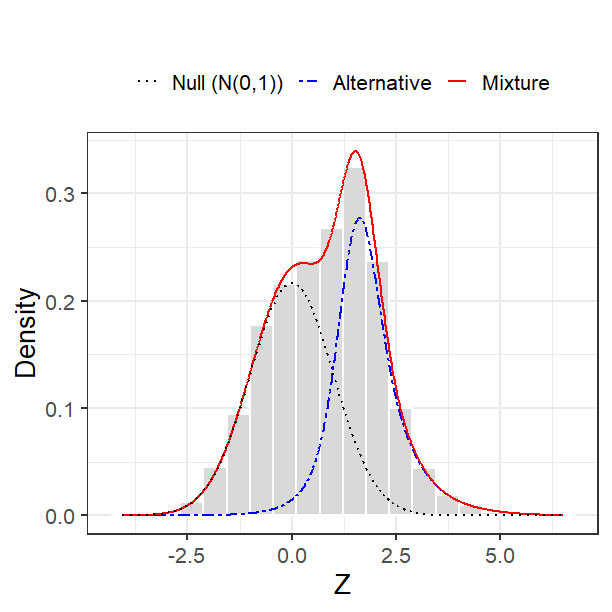}
}

\caption{Estimated density from each method with one simulated sample of Case $\uppercase\expandafter{\romannumeral3}$ when $N = 5000$}
\label{fig: Case 3}
\end{figure}

\begin{figure}[p] 
\centering

\subfigure[Parametric method]{
\includegraphics[width=0.31 \linewidth]{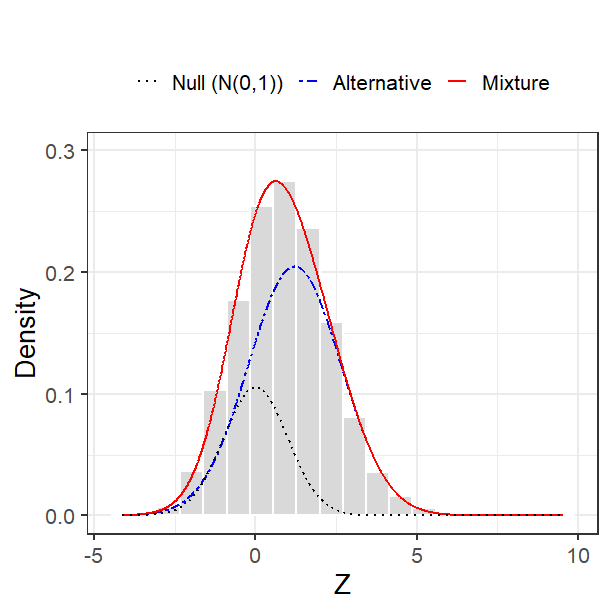}
}
\subfigure[Nonparametric method]{
\includegraphics[width=0.31 \linewidth]{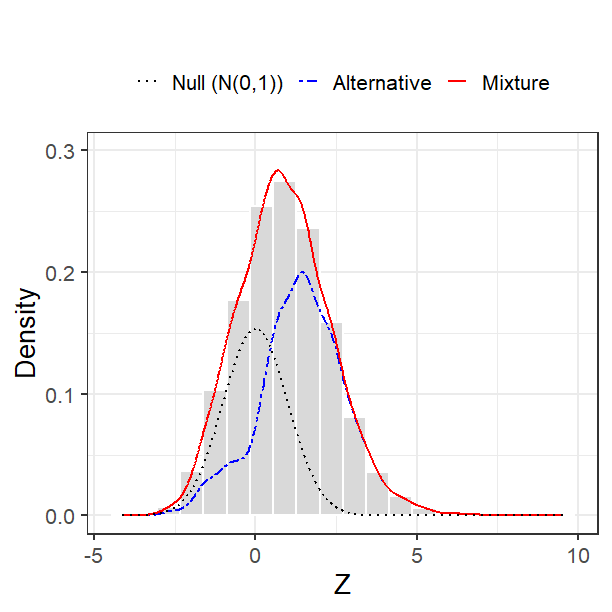}
}
\subfigure[Semiparametric method]{
\includegraphics[width=0.31 \linewidth]{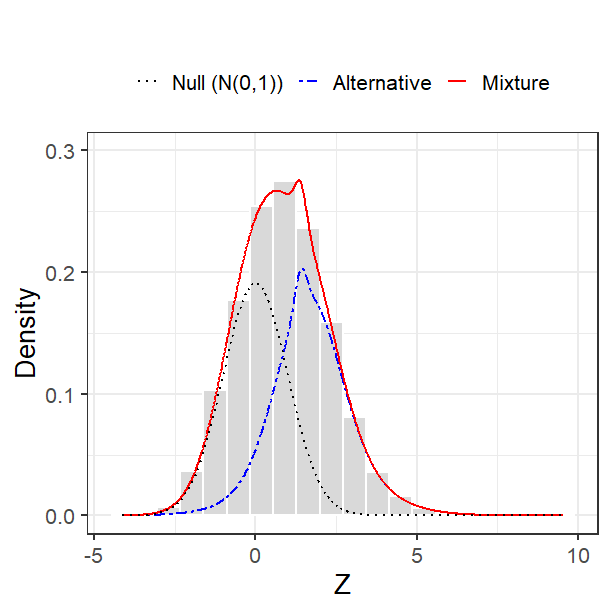}
}

\caption{Estimated density from each method with one simulated sample of Case $\uppercase\expandafter{\romannumeral4}$ when $N = 5000$}
\label{fig: Case 4}
\end{figure}

\begin{figure}[p] 
\centering

\subfigure[Parametric method]{
\includegraphics[width=0.31 \linewidth]{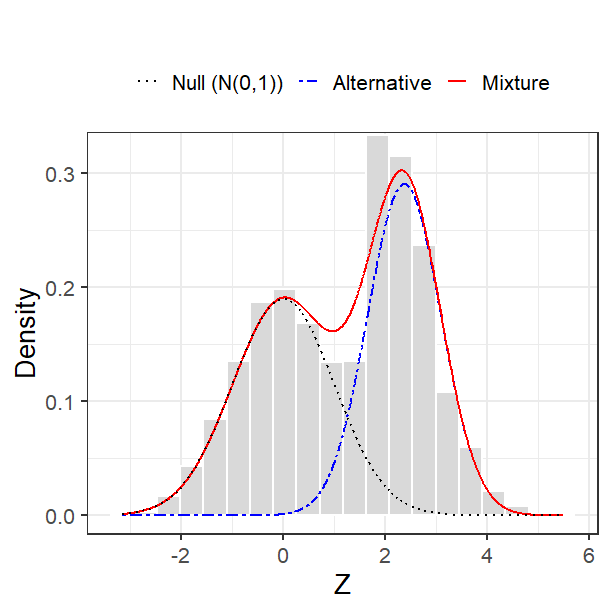}
}
\subfigure[Nonparametric method]{
\includegraphics[width=0.31 \linewidth]{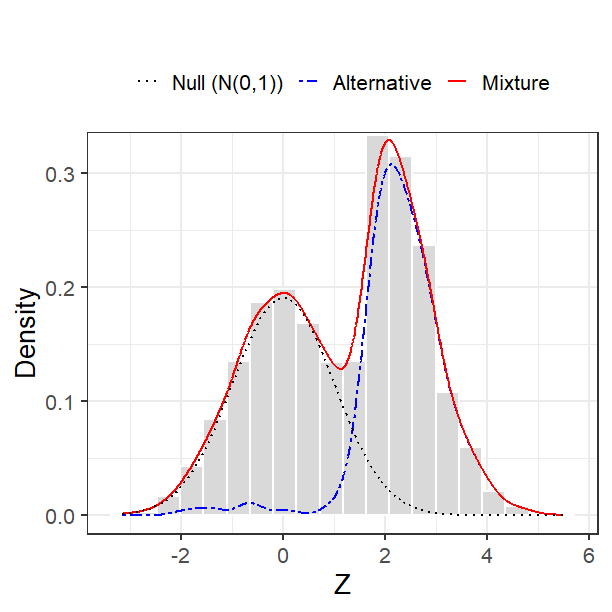}
}
\subfigure[Semiparametric method]{
\includegraphics[width=0.31 \linewidth]{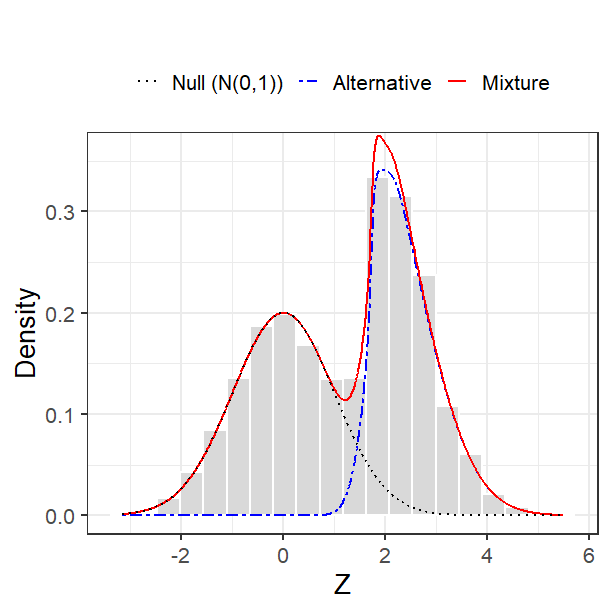}
}

\caption{Estimated density from each method with one simulated sample of Case $\uppercase\expandafter{\romannumeral5}$ when $N = 5000$}
\label{fig: Case 5}
\end{figure}

\begin{figure}[p] 
\centering

\subfigure[Parametric method]{
\includegraphics[width=0.31 \linewidth]{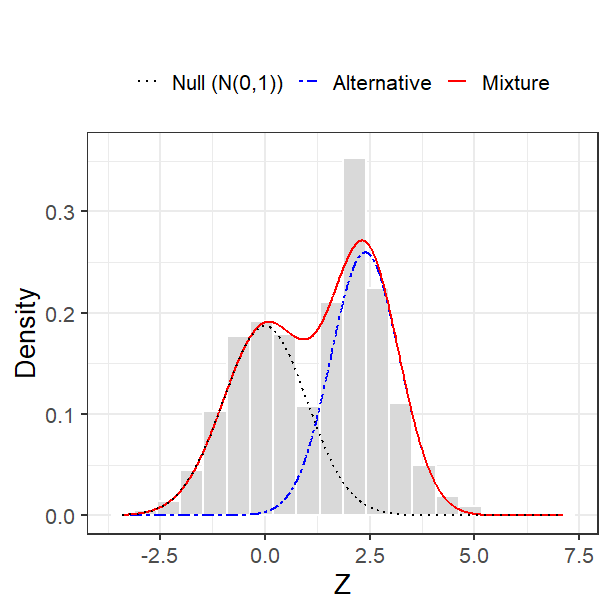}
}
\subfigure[Nonparametric method]{
\includegraphics[width=0.31 \linewidth]{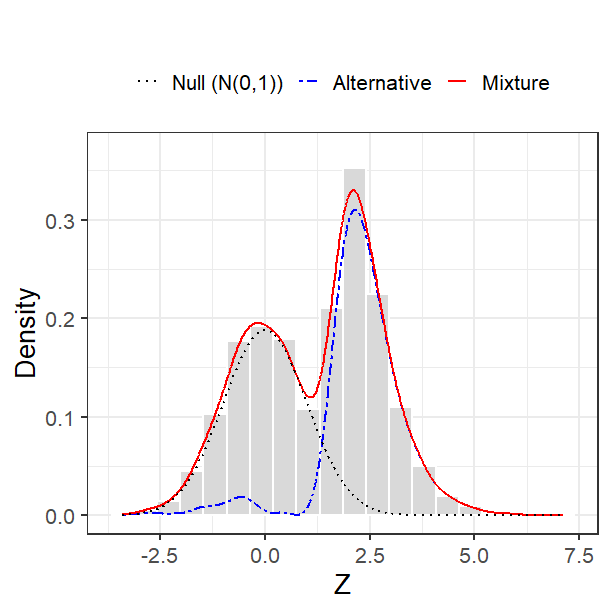}
}
\subfigure[Semiparametric method]{
\includegraphics[width=0.31 \linewidth]{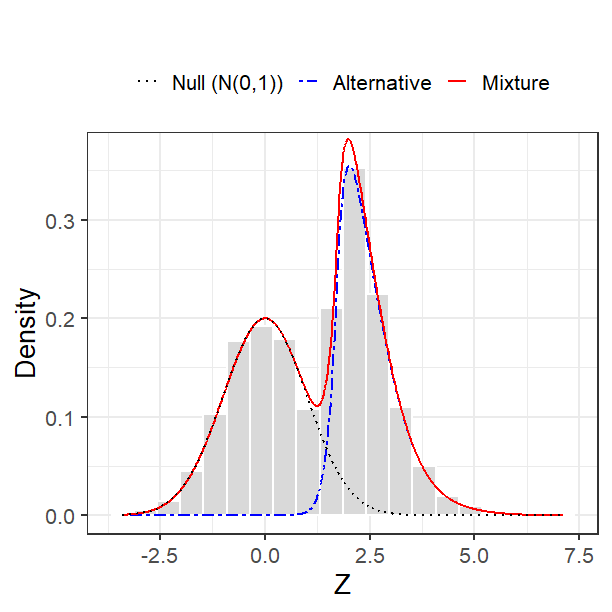}
}

\caption{Estimated density from each method with one simulated sample of Case $\uppercase\expandafter{\romannumeral6}$ when $N = 5000$}
\label{fig: Case 6}
\end{figure}

\section{Real Data Analysis} \label{sec:real}

In real gene–expression studies, the true null or non–null status of each gene is unknown, 
and therefore external clustering metrics such as ARI or AMI cannot be computed.  
Instead, following the empirical Bayes framework of \cite{efron2001empirical} and \cite{efron2002empirical}, 
we assess performance through posterior–probability–based error summaries.
Under the two–component mixture model in~\eqref{twocomp}, the local FDR 
is defined as the posterior probability that a gene with transformed statistic $z$ belongs to the null component, as given in~\eqref{localFDR}.  

For each gene $i$, we denote by $\hat{\gamma}_i$ the estimated posterior null probability obtained from the respective method.
Given a threshold $c$, we declare gene $i$ to be non–null whenever 
$\hat{\gamma}_i \le c$.  
Let
\[
N_r = \sum_{i=1}^N I_{[0,c]}(\hat{\gamma}_i)
\]
denote the total number of selected genes.  
Following \citet{mclachlan2006simple}, the estimated FDR is
\begin{equation*}
\widehat{\mathrm{FDR}}(c)
=
\frac{\displaystyle \sum_{i=1}^N \hat{\gamma}_j \, I_{[0,c]}(\hat{\gamma}_i)}
{\displaystyle N_r},
\end{equation*}
which estimates the expected proportion of null genes among the declared discoveries.
Conversely, the false negative rate (FNR) measures the proportion of truly non–null genes 
that fail to be selected.  
Its empirical Bayes estimator is
\begin{equation*}
\widehat{\mathrm{FNR}}(c)
=
\frac{\displaystyle \sum_{i=1}^N (1-\hat{\gamma}_i)\, I_{(c,\infty)}(\hat{\gamma}_i)}
{\displaystyle \sum_{i=1}^N (1-\hat{\gamma}_i)},
\end{equation*}
which averages the posterior probabilities of being non–null over the unselected genes.

These posterior–probability–based summaries provide practical and interpretable tools for comparing the competing mixture–modeling approaches in real datasets, where the true null and non–null labels are unobserved. In this context, methods achieving lower values of FDR and FNR are considered to exhibit superior differential–expression detection performance.

\subsection{Colon Cancer Data} \label{sec:real1}

\citet{alon1999broad} obtained microarray gene–expression measurements from colon cancer and normal colon tissues using Affymetrix oligonucleotide arrays. The dataset contains expression levels for more than 6500 genes across 40 tumor samples and 22 normal samples. Following the filtering strategy used in \citet{mclachlan2002mixture}, \citet{mclachlan2006simple}, and \citet{pommeret2019semiparametric}, we restrict the analysis to the $N=2000$ genes with the highest minimal intensity across samples to reduce the impact of low–expression noise.

Using the preprocessing procedure described in Section~\ref{sec:preprocessing}, we compute for each gene a pooled two–sample $t$–statistic comparing tumor and normal tissues, and then transform these statistics into $z$–scores via the mapping in~\eqref{eq:ztransform}. Genes yielding $z=-\infty$ which occurs when the observed test statistic provides essentially no evidence against the null hypothesis are excluded from downstream analysis. After this filtering step, a total of 971 genes remain.
For these transformed $z$–scores, we apply the three mixture–modeling approaches examined in the simulation study—the parametric method, the nonparametric method, and the proposed semiparametric method—to identify differentially expressed genes and to compare their behavior on real data.

Figures~\ref{fig: alon_fdr}--\ref{fig: alon_hist} summarize the results. Across all thresholds $c$, the proposed semiparametric method yields the lowest FDR and FNR, indicating more reliable identification of differentially expressed genes. The parametric and nonparametric methods show differing behavior: the parametric Gaussian mixture achieves uniformly lower FDR than the nonparametric estimator, whereas the FNR curves exhibit a threshold-dependent pattern. For $c < 0.3$, the parametric method has smaller FNR, while for $c > 0.3$ the nonparametric method performs better.

The estimated alternative densities in Figure~\ref{fig: alon_hist} provide additional insight. The nonparametric density is irregular and unstable, while the parametric model enforces a symmetric normal form that limits its ability to capture subtle departures from normality. The proposed semiparametric method produces a smooth and stable estimate that aligns closely with the empirical distribution, suggesting that the underlying alternative distribution in this dataset is neither strongly skewed nor heavy-tailed. 

\begin{figure}[ht] 
\centering

\subfigure[FDR]{
\includegraphics[width=0.45 \linewidth]{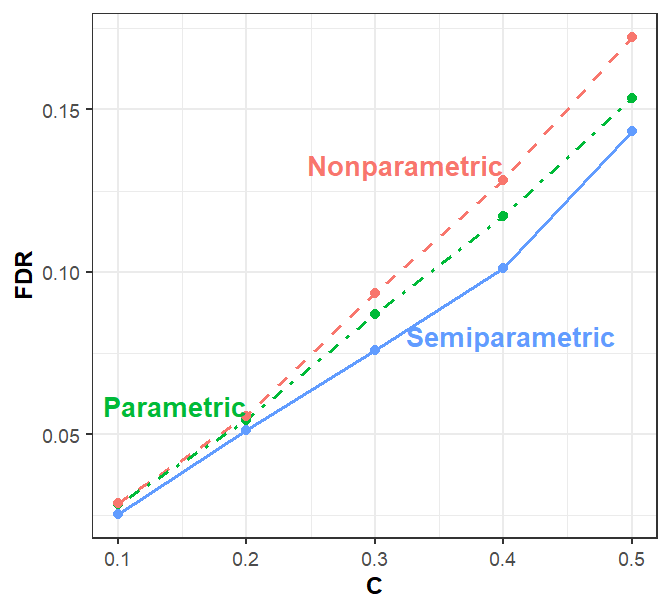}
}
\subfigure[FNR]{
\includegraphics[width=0.45 \linewidth]{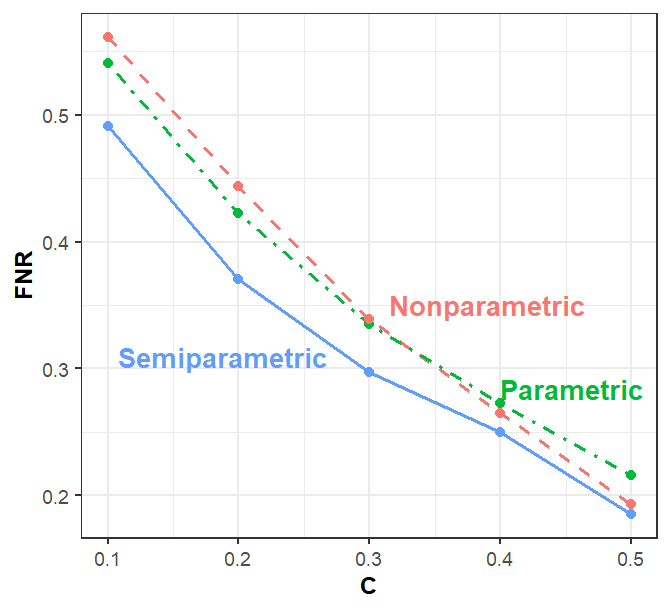}
}
\caption{Estimated FDR and FNR for the colon cancer dataset across a range of thresholds}
\label{fig: alon_fdr}
\end{figure}

\begin{figure}[ht] 
\centering

\subfigure[Parametric method]{
\includegraphics[width=0.31 \linewidth]{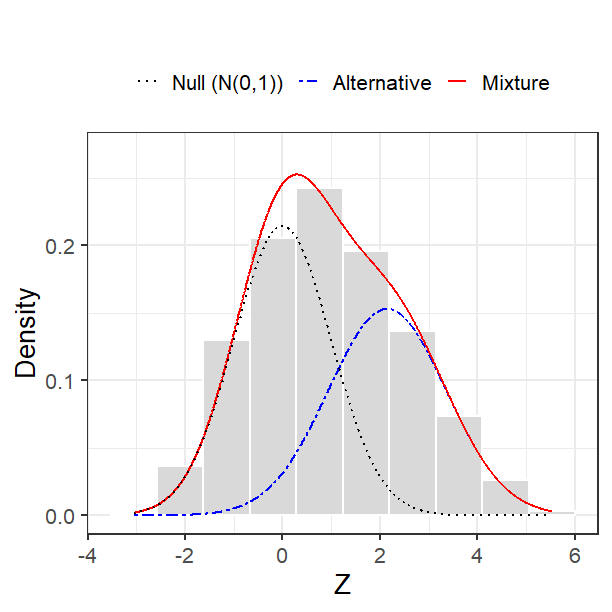}
}
\subfigure[Nonparametric method]{
\includegraphics[width=0.31 \linewidth]{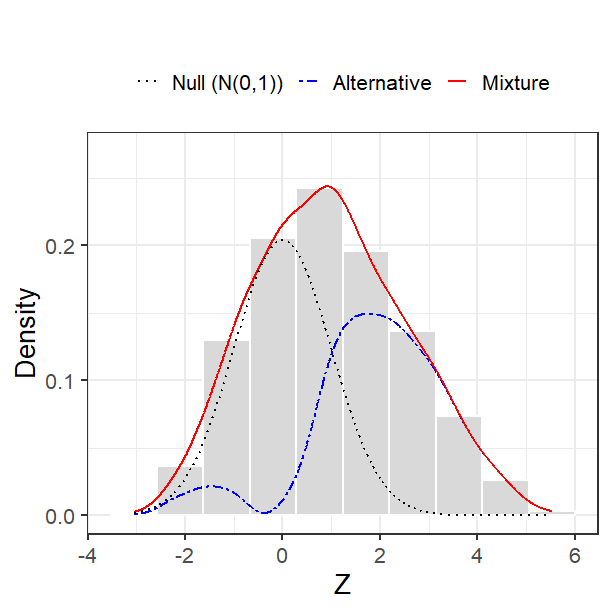}
}
\subfigure[Semiparametric method]{
\includegraphics[width=0.31 \linewidth]{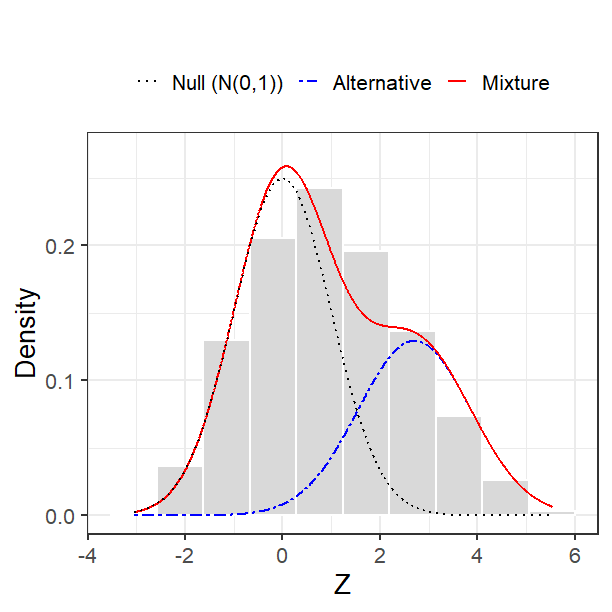}
}

\caption{Estimated null, alternative, and mixture densities for the colon cancer dataset}
\label{fig: alon_hist}
\end{figure}

\subsection{Leukemia Data} \label{sec:real2}

\citet{golub1999molecular} analyzed microarray gene–expression profiles from patients with acute leukemia, distinguishing Acute Lymphoblastic Leukemia (ALL) from Acute Myeloid Leukemia (AML) using Affymetrix high–density oligonucleotide arrays. The dataset consists of $M=72$ tissue samples (47 ALL and 25 AML) and $N=7129$ genes. Following the preprocessing pipeline of \citet{dudoit2002comparison}, we apply filtering to remove uninformative genes and then perform log–transformation and column–wise and row–wise standardization, yielding 3731 retained genes. Using the procedure in Section~\ref{sec:preprocessing}, we further restrict attention to 1853 informative genes. For these transformed $z$–scores, we fit the parametric, nonparametric, and proposed semiparametric mixture models to compute posterior probabilities and estimate FDR and FNR.

Figures~\ref{fig: golub_fdr}--\ref{fig: golub_hist} display the results for the leukemia dataset. For the FDR curves, all three methods behave similarly when the threshold $c$ is small, but clear differences emerge once $c$ exceeds approximately $0.2$. Beyond this point, the proposed semiparametric method achieves the lowest FDR across the entire range, while the parametric method yields slightly lower FDR than the nonparametric approach. A different pattern is observed for the FNR: for relatively large thresholds ($c  > 0.4$), the nonparametric method attains the smallest FNR, whereas for more stringent thresholds ($c < 0.4$), the proposed semiparametric method achieves the lowest FNR, outperforming both competitors. Taken together, these results indicate that the proposed method provides the most favorable balance between false discoveries and false negatives across practically relevant choices of $c$.

The estimated alternative densities in Figure~\ref{fig: golub_hist} further clarify these differences. The nonparametric estimator shows substantial variability, particularly in the tails, indicating sensitivity to sampling fluctuations, whereas the parametric Gaussian mixture enforces a symmetric normal form that cannot accommodate the evident asymmetry and heavy–tailed behavior present in the data. By contrast, the proposed semiparametric approach yields a smooth and stable density estimate that successfully captures both skewness and tail heaviness, consistent with the more flexible structure of the SNSM formulation.

\begin{figure}[ht] 
\centering

\subfigure[FDR]{
\includegraphics[width=0.45 \linewidth]{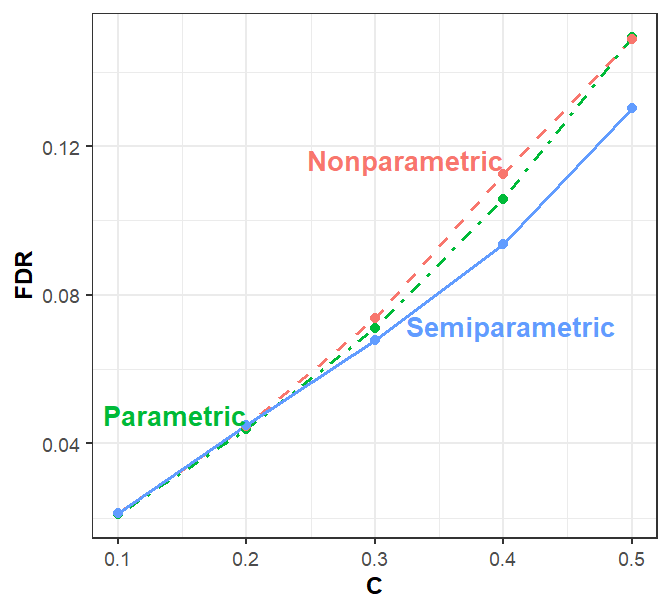}
}
\subfigure[FNR]{
\includegraphics[width=0.45 \linewidth]{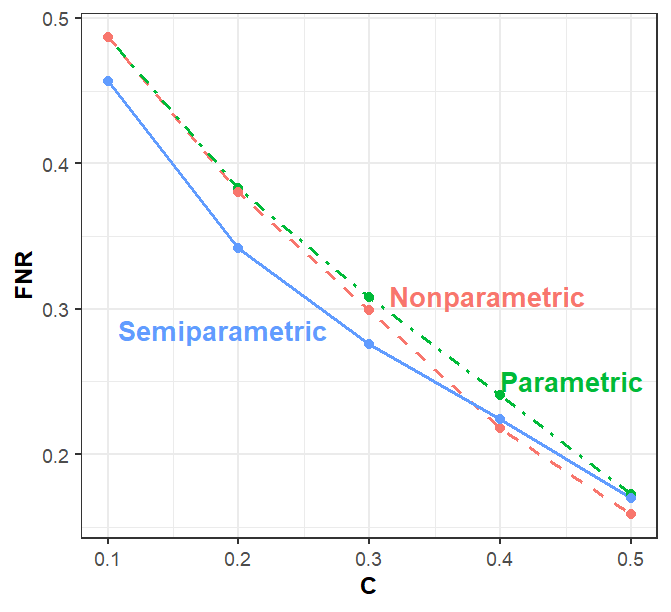}
}
\caption{Estimated FDR and FNR for the leukemia dataset across a range of thresholds}
\label{fig: golub_fdr}
\end{figure}

\begin{figure}[ht] 
\centering

\subfigure[Parametric method]{
\includegraphics[width=0.31 \linewidth]{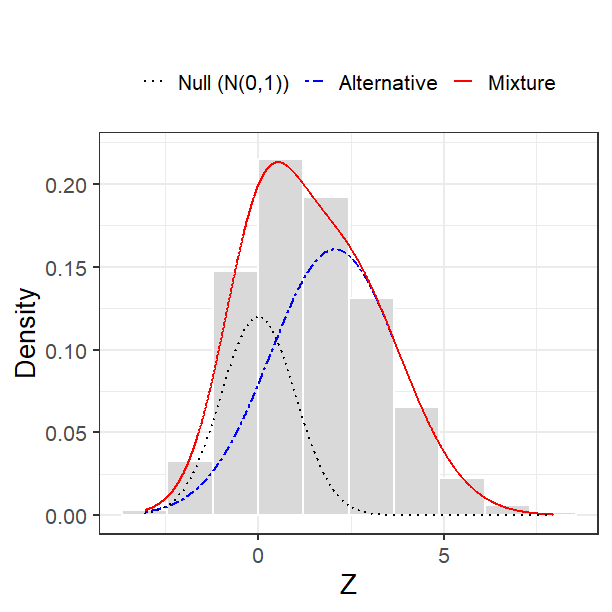}
}
\subfigure[Nonparametric method]{
\includegraphics[width=0.31 \linewidth]{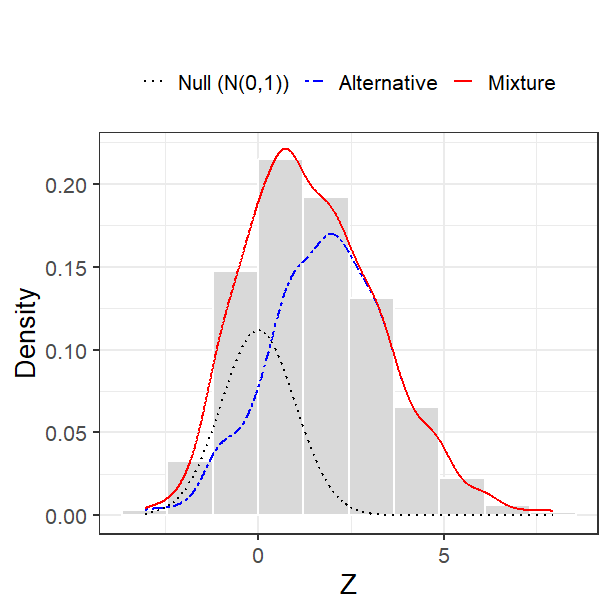}
}
\subfigure[Semiparametric method]{
\includegraphics[width=0.31 \linewidth]{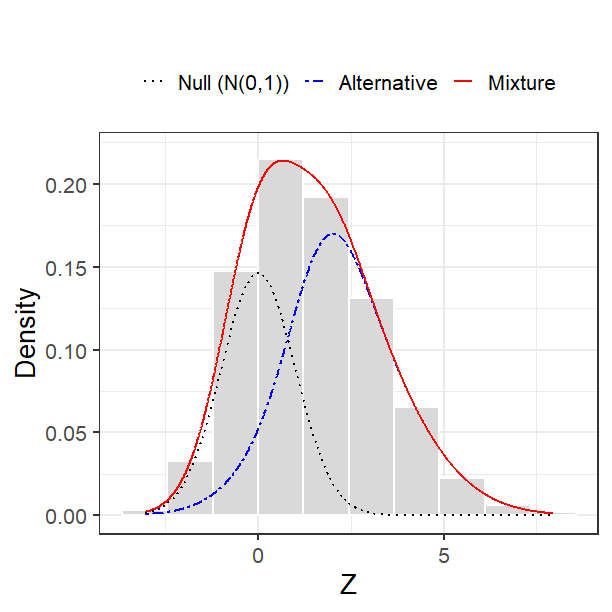}
}
\caption{Estimated null, alternative, and mixture densities for the leukemia dataset}
\label{fig: golub_hist}
\end{figure}

\section{Discussion} \label{sec:disc}

This paper introduces a semiparametric two-component mixture model for large--scale inference in gene-expression studies, in which the null distribution is fixed as standard normal and the alternative component is modeled by a semiparametric skew-normal scale mixture with an unspecified scale mixing distribution. By combining a finite--dimensional parametrization for location and skewness with a nonparametric maximum likelihood estimator for the scale distribution, the proposed model provides substantially greater flexibility than fully parametric approaches while avoiding the instability that often affects fully nonparametric methods. 

The numerical study demonstrates that the proposed method achieves consistently higher accuracy in separating null and non--null genes across a wide range of symmetric, skewed, light--tailed, and heavy--tailed alternatives. When the parametric Gaussian mixture is misspecified, the performance gains of the proposed method are particularly pronounced, and it also avoids the instability inherent in fully nonparametric estimators. Analyses of the colon cancer and leukemia datasets further support these findings: the proposed approach yields lower false discovery and false negative rates across practically relevant thresholds, and the estimated alternative densities exhibit moderate deviations from normality that are captured without unnecessary complexity. Overall, the proposed semiparametric mixture model provides a flexible, robust, and computationally tractable framework for differential gene--expression analysis, offering improved performance without imposing restrictive distributional assumptions.

As a direction for future work, it would be of interest to relax the assumption of a standard normal null distribution. Allowing for an empirical or nonparametric null component, for example under mild symmetry constraints, within the proposed framework could further improve robustness in identifying differential gene expression while maintaining identifiability and computational tractability. Such an extension would broaden the applicability of the proposed model beyond the strictly specified marginal setting considered here.

It would also be valuable to incorporate covariate information directly into the mixture structure. In many gene-expression studies, gene-specific or sample-specific covariates may provide additional information relevant to differential expression. Extending the proposed semiparametric alternative to covariate-dependent mixture formulations (e.g., \citealp{nguyen2016laplace}; \citealp{oh2023merging}; \citealp{oh2024semiparametric}; \citealp{chamroukhi2024functional}; \citealp{hwang2025mixture}) may further enhance flexibility and detection power while preserving the stability of the alternative density specification. Such developments would allow the framework to accommodate structured heterogeneity beyond the marginal mixture setting considered here.







\bibliographystyle{apa} 

\bibliography{references}

\section*{Appendix A: Proof of Theorem \ref{thm1}}
Assume that there exist $(\tilde{\pi}, \tilde{\mu}, \tilde{\lambda}, \tilde{G})$ such that
\begin{align}
\pi \phi(z; 0,1) + (1 - \pi) f_{\text{SNSM}}(z; \mu, \lambda, G) 
= \tilde{\pi} \phi(z; 0,1) + (1 - \tilde{\pi}) f_{\text{SNSM}}(z; \tilde{\mu}, \tilde{\lambda}, \tilde{G}). 
\label{ident}
\end{align}
 Rearranging terms gives
\[
(\pi-\tilde{\pi})\phi(z;0,1)
=
(1-\tilde{\pi})f_{\mathrm{SNSM}}(z;\tilde{\mu},\tilde{\lambda},\tilde{G})
-
(1-\pi)f_{\mathrm{SNSM}}(z;\mu,\lambda,G).
\]
Since both
\[
f_{\mathrm{SNSM}}(\cdot;\mu,\lambda,G),\quad
f_{\mathrm{SNSM}}(\cdot;\tilde{\mu},\tilde{\lambda},\tilde{G})
\in \mathcal{F}_{\mathrm{SNSM}}^{+},
\]
Assumption~\textup{(A1)} implies that the coefficient of
$\phi(\cdot;0,1)$ must be zero. Hence
\[
\pi=\tilde{\pi}.
\]

Taking characteristic functions on both sides of \eqref{ident}, we obtain:
\begin{align*}
\psi_z(t) = \pi \exp(-t^2/2) + (1 - \pi) \exp(i \mu t) \psi_{G, \lambda}(t),
\end{align*}
and
\begin{align*}
\tilde{\psi}_z(t) = \tilde{\pi} \exp(-t^2/2) + (1 - \tilde{\pi}) \exp(i \tilde{\mu} t) \psi_{\tilde{G}, \tilde{\lambda}}(t),
\end{align*}
where $i = \sqrt{-1}$ and $\psi_{G, \lambda}(t)$ denotes the characteristic function of the zero-location SNSM distribution with  $(\lambda, G)$.

Since $\pi = \tilde{\pi}$, equating both characteristic functions yields
\begin{equation}
\label{phase_eq}
\exp(i \mu t) \psi_{G,\lambda}(t)
=
\exp(i \tilde{\mu} t) \psi_{\tilde{G},\tilde{\lambda}}(t),
\qquad \forall t\in\mathbb{R}.
\end{equation}
Since $\psi_{G,\lambda}(0)=\psi_{\tilde{G},\tilde{\lambda}}(0)=1$ and both characteristic
functions are continuous, there exists $\delta>0$ such that
$\psi_{G,\lambda}(t)\neq 0$ and $\psi_{\tilde{G},\tilde{\lambda}}(t)\neq 0$ for all $|t|<\delta$.
Thus, for $|t|<\delta$, from \eqref{phase_eq} we obtain
\begin{equation}
\label{ratio_phase}
\frac{\psi_{G,\lambda}(t)}{\psi_{\tilde{G},\tilde{\lambda}}(t)}
=
\exp\!\{i(\tilde{\mu}-\mu)t\}.
\end{equation}
The right-hand side of \eqref{ratio_phase} is the characteristic function of a point-mass
distribution at $(\tilde{\mu}-\mu)$, that is, a pure location shift. Hence,
\eqref{ratio_phase} implies that if
$X \sim \mathrm{SNSM}(0,\lambda,G)$ and
$Y \sim \mathrm{SNSM}(0,\tilde{\lambda},\tilde{G})$, then
$X \stackrel{d}{=} Y + (\tilde{\mu}-\mu)$.
In particular, the zero-location parameterization is fixed by construction (i.e., the
location parameter is not absorbed into $(\lambda,G)$). Therefore, two distributions in the
subfamily $\{\mathrm{SNSM}(0,\lambda,G)\}$ cannot differ by a nonzero translation. Hence,
we must have $\tilde{\mu}-\mu=0$, that is,
\[
\mu=\tilde{\mu}.
\]

Since $\mu=\tilde{\mu}$, it follows from \eqref{phase_eq} that
\[
\psi_{G,\lambda}(t)=\psi_{\tilde{G},\tilde{\lambda}}(t),
\qquad \forall t\in\mathbb{R}.
\]
We now prove that $\psi_{G, \lambda}(t) = \psi_{\tilde{G}, \tilde{\lambda}}(t)$ implies $G = \tilde{G}$ and $\lambda = \tilde{\lambda}$. Note that
\begin{align*}
\psi_{G, \lambda}(t) 
&= \int \exp\left(-\frac{t^2 \sigma^2}{2}\right) dG(\sigma) 
+ i \int \exp\left(-\frac{t^2 \sigma^2}{2}\right) e\left( \frac{\Gamma(\lambda) \sigma t}{\sqrt{2}} \right) dG(\sigma), \\
\psi_{\tilde{G}, \tilde{\lambda}}(t) 
&= \int \exp\left(-\frac{t^2 \sigma^2}{2}\right) d\tilde{G}(\sigma) 
+ i \int \exp\left(-\frac{t^2 \sigma^2}{2}\right) e\left( \frac{\Gamma(\tilde{\lambda}) \sigma t}{\sqrt{2}} \right) d\tilde{G}(\sigma),
\end{align*}
where $\Gamma(\lambda) = \lambda / \sqrt{1 + \lambda^2}$ and $e(z)$ denotes the complementary error function:
\[
e(z) = -\frac{2i}{\sqrt{\pi}} \int_0^{iz} \exp(-t^2) dt.
\]
Equating the real and imaginary parts of $\psi_{G, \lambda}(t)$ and $\psi_{\tilde{G}, \tilde{\lambda}}(t)$, we obtain:
\begin{align}
\int \exp\left(-\frac{t^2 \sigma^2}{2}\right) dG(\sigma) &= \int \exp\left(-\frac{t^2 \sigma^2}{2}\right) d\tilde{G}(\sigma), \label{first} \\
\int \exp\left(-\frac{t^2 \sigma^2}{2}\right) e\left( \frac{\Gamma(\lambda) \sigma t}{\sqrt{2}} \right) dG(\sigma) 
&= \int \exp\left(-\frac{t^2 \sigma^2}{2}\right) e\left( \frac{\Gamma(\tilde{\lambda}) \sigma t}{\sqrt{2}} \right) d\tilde{G}(\sigma). \label{second}
\end{align}
Equation \eqref{first} implies $G = \tilde{G}$ by the uniqueness of the Laplace transform. Substituting into \eqref{second} and using the injectivity of $e(\cdot)$, we obtain $\Gamma (\lambda) = \Gamma (\tilde{\lambda})$, and hence $\lambda = \tilde{\lambda}$.
Therefore, identifiability holds.

\section*{Appendix B: Proof of Theorem \ref{thm2}}

Let us define the following metric:
\[
d \left( (\pi, \mu, \lambda, G), (\hat{\pi}, \hat{\mu}, \hat{\lambda}, \hat{G}) \right)
= | \pi - \hat{\pi} | + | \mu - \hat{\mu} | + | \lambda - \hat{\lambda} |
+ \int | G(\sigma) - \hat{G}(\sigma) | e^{-|\sigma|} \, d\tau(\sigma),
\]
where \(\tau\) denotes the Lebesgue measure on \(\mathbb{R}^+\).

\citet{kiefer1956consistency} established that under five assumptions—including continuity (Assumption 2), identifiability (Assumption 4), and integrability (Assumption 5)—the MLE converges in probability:
\[
d \left( (\pi, \mu, \lambda, G), (\hat{\pi}, \hat{\mu}, \hat{\lambda}, \hat{G}) \right) \overset{p}{\to} 0.
\]
The semiparametric SNSM density in \eqref{eq:proposed_model} trivially satisfies Assumptions 1, 2, and 3. Assumption 4 (identifiability) is established in Theorem~\ref{thm1}.

Since the support of \(G\) is restricted to \([\ell, \infty)\), the density in \eqref{eq:proposed_model} is uniformly bounded. Thus, it remains to verify Assumption 5, which requires:
\[
- \mathbb{E} \left[\log \left\{ \pi \phi(z) + (1 - \pi) f_{\text{SNSM}}(z; \mu, \lambda, G) \right\} \right] < \infty.
\]
Because \(\pi \in (0, 1)\), we have \(-\log \pi < \infty\) and \(-\log (1 - \pi) < \infty\). Moreover, the normal and skew-normal densities have finite first moments, implying \(-\mathbb{E}[\log \phi(z)] < \infty\) and 
\[
\int_{-\infty}^{\infty} f(z ; \mu, \lambda, \sigma) \left[\log |z| \right]^+ dz < \infty.
\]
By the condition \(\int_{\ell}^{\infty} \log \sigma \, dG(\sigma) < \infty\), it follows that 
\[
\mathbb{E} \left[\log |Z| \right]^+ < \infty,
\]
and thus
\[
- \int_{-\infty}^{\infty} \log \left\{ f(z ; \mu, \lambda, G) \right\} f(z ; \mu, \lambda, G) dz < \infty,
\]
by Lemma in Section 2 of \citet{kiefer1956consistency}. This verifies Assumption 5 and completes the proof.

\end{document}